\def\isdraft{0}
\newtheorem{theorem}{Theorem}
\newtheorem{lemma}[theorem]{Lemma}
\newtheorem{fact}[theorem]{Fact}
\theoremstyle{definition} 
\newtheorem{convention}[theorem]{Convention}
\newtheorem{remark}[theorem]{Remark}
\newtheorem{example}[theorem]{Example}
\newtheorem{problem}{Problem}
\newcommand{\righttherefore}{:\joinrel\cdot\,}
\title{Bilingual analogical proportions via hedges}
\author{
	Christian Anti\'c
}
\address{
	christian.antic@icloud.com\\
	Vienna University of Technology\\
	Vienna, Austria
}
\begin{document}
\begin{abstract}
	Analogical proportions are expressions of the form ``$a$ is to $b$ what $c$ is to $d$'' at the core of analogical reasoning which itself is at the core of human and artificial intelligence. The author has recently introduced {\em from first principles} an abstract algebro-logical framework of analogical proportions within the general setting of universal algebra and first-order logic. In that framework, the source and target algebras have the {\em same} underlying language. The purpose of this paper is to generalize his unilingual framework to a bilingual one where the underlying languages may differ. This is achieved by using hedges in justifications of proportions. The outcome is a major generalization vastly extending the applicability of the underlying framework. In a broader sense, this paper is a further step towards a mathematical theory of analogical reasoning.
\end{abstract}
\maketitle

\section{Introduction and preliminaries}

Analogical proportions are expressions of the form ``$a$ is to $b$ what $c$ is to $d$'' at the core of analogical reasoning which itself is at the core of human and artificial intelligence with applications to such diverse tasks as proving mathematical theorems and building mathematical theories, commonsense reasoning, learning, language acquisition, and story telling \shortcite<e.g.>{Barbot19,Boden98,Gust08,Hofstadter01,Hofstadter13,Krieger03,Miclet09,Polya54,Prade18,Prade21,Schmidt14,Stroppa06,Winston80}. 

\citeA{Antic22,Antic22-3,Antic23-4} has recently introduced {\em from first principles} an abstract algebro-logical framework of analogical proportions within the general setting of universal algebra and first-order logic. It is a promising novel model of analogical proportions with appealing mathematical properties \cite<cf.>{Antic21-3,Antic23-7}. 

Anti\'c's framework is unilingual in the sense that the underlying languages (or similarity types) of the source and target algebras are identical. This means that for example analogical proportions over the source algebra $(\mathbb N,S)$ --- here $S$ is the unary successor function --- and the target algebra $(\mathbb N,+)$ cannot be formulated since $S$ and $+$ have different ranks. In this paper, we shall generalize the framework by allowing the underlying languages to differ (see \prettyref{exa:1_2__2_4}). This vastly extends the applicability of the framework. Technically, we replace terms in justifications by (restricted) hedges \cite<cf.>{Yamamoto01,Kutsia14}.

In \prettyref{sec:Properties}, we obtain bilingual generalizations of the Uniqueness Lemma (see \prettyref{lem:BUL}) and the Functional Proportion Theorem (see \prettyref{thm:BFPT}). Moreover, in \prettyref{thm:p-axioms} we observe that the generalized bilingual framework of this paper preserves all desired properties listed in \citeA[§4.3]{Antic22} which is further evidence for the robustness of the underlying framework. 

Finally, \prettyref{sec:Problems} lists some problems which remained unsolved in this paper and appear to be interesting lines of future research.

In a broader sense, this paper is a further step towards a mathematical theory of analogical reasoning.

\subsection*{Preliminaries}

We assume the reader to be fluent in basic universal algebra as it is presented for example in \citeA[§II]{Burris00}. 

A {\em language} $\mathcal L$ of algebras consists of a set of ranked {\em function symbols} together with an associated {\em rank function} $r_\mathcal L:\mathcal L\to\mathbb N$ (often simply denoted by $r$), and a denumerable set $\mathcal Z=\{z,z_1,z_2,\ldots\}$ of {\em $\mathcal Z$-variables}. The sets $\mathcal{L,Z}$ are pairwise disjoint. We denote the set of function symbols of rank $n$ by $\mathcal L_n$ so that $\mathcal L_0$ denotes the {\em constant symbols}. The set $\mathcal{T_L(X,Z)}$ of {\em $\mathcal L$-terms} with variables among $\mathcal{X\cup Z}$ is defined as usual. We denote the set of $\mathcal Z$-variables occurring in a term $s$ by $\mathcal Z(s)$. We say that a term $s$ has {\em rank} $n$ iff $\mathcal Z(s)=\{z_1,\ldots,z_n\}$. A term is {\em ground} iff it contains no constant symbols. Terms can be interpreted as ``generalized elements'' containing variables as placeholders for concrete elements.

\begin{convention}[Dot convention] We add a dot to formal symbols to distinguish them from the objects that they are inteded to denote --- for example, $\dot f$ is a function symbol which stands for a function $f$.
\end{convention}

An {\em $\mathcal L$-algebra} $\mathfrak A$ consists of:
\begin{itemize}
	\item a non-empty set $A$, the {\em universe} of $\mathfrak A$;
	\item for each $\dot f\in\mathcal L$ a function $\dot f^\mathfrak A:A^{r_\mathcal L(\dot f)}\to A$, the {\em functions} of $\mathfrak A$;
	\item for each constant symbol $\dot c\in\mathcal L_0$, an element $\dot c^\mathfrak A\in A$, the {\em distinguished elements} of $\mathfrak A$.
\end{itemize}

An {\em $\mathfrak A$-assignment} is a function $\alpha:\mathcal Z\to A$ mapping $\mathcal Z$-variables to elements of $A$. 
In what follows, we always write the application of an assignment in postfix notation. Relative to an $\mathfrak A$-assignment $\alpha$, we define the denotation $s^\mathfrak A\alpha$ of an $\mathcal L$-term $s$ in $\mathfrak A$ with respect to $\alpha$ inductively as follows:
\begin{itemize}
	\item for a variable $z\in\mathcal Z$, $z^\mathfrak A\alpha:=\alpha(z)$;
	\item for a constant symbol $\dot c\in\mathcal L_0$, $\dot c^\mathfrak A\alpha:=c^\mathfrak A$;
	\item for a function symbol $\dot f\in\mathcal L$ and $\mathcal L$-terms $s_1,\ldots,s_n\in\mathcal{T_L(Z)}$,
	\begin{align*} 
		(\dot f(s_1,\ldots,s_{r(\dot f)}))^\mathfrak A\alpha:= f^\mathfrak A(s_1^\mathfrak A\alpha,\ldots s_{r(\dot f)}^\mathfrak A\alpha).
	\end{align*}
\end{itemize} Notice that $s^\mathfrak A$ induces a function, which we again denote by $s^\mathfrak A$, given by
\begin{align*} 
	s^\mathfrak A(z_1,\ldots,z_{r(s)}):A^{r(s)}\to A:(a_1,\ldots,a_{r(s)})\mapsto s^\mathfrak A(a_1,\ldots,a_{r(s)}):=s^\mathfrak A\{z_1/a_1,\ldots,z_{r(s)}/a_{r(s)}\}.
\end{align*} We call a term $s$ {\em injective} in $\mathfrak A$ iff $s^\mathfrak A$ is an injective function. Moreover, we call $s$ {\em constant} in $\mathfrak A$ iff $s^\mathfrak A$ is a constant function --- notice that ground terms induce constant functions which can be identified with elements of $A$.

\section{Bilingual analogical proportions}

This is the main section of the paper. Here we shall generalize \citeS{Antic22,Antic22-3} abstract algebraic framework of analogical proportions from an unilingual to a bilingual setting where the underlying languages may differ. 

For this, let $\mathfrak A$ and $\mathfrak B$ be algebras over languages of algebras $\mathcal L_\mathfrak A$ and $\mathcal L_\mathfrak B$, respectively, where we require $|\mathcal L_\mathfrak A|=|\mathcal L_\mathfrak B|$. That is, we assume that both languages are indexed by the same index set $I$ such that
\begin{align*} 
	\mathcal L_\mathfrak A=\{\dot f_i\mid i\in I\} \quad\text{and}\quad \mathcal L_\mathfrak B=\{\dot g_i\mid i\in I\},
\end{align*} and thus
\begin{align*} 
	\mathfrak A=\left(A,\left\{\dot f_i^\mathfrak A \;\middle|\; i\in I\right\}\right) \quad\text{and}\quad \mathfrak B=\left(B,\left\{\dot g_i^\mathfrak B \;\middle|\; i\in I\right\}\right).
\end{align*} This is not a serious restriction since we can always add copies of a function --- for example, $(\mathbb N,S_1,S_2)$ contains two copies of the same successor function and has the same index set as $(\mathbb N,+,\cdot)$. 

Now let $\mathcal L$ be a language generalizing $\mathcal L_\mathfrak A$ and $\mathcal L_\mathfrak B$ in the sense that it contains a function symbol $\dot h_i$ for each $\dot f_i\in\mathcal L_\mathfrak A$ and $\dot g_i\in\mathcal L_\mathfrak B$, $i\in I$, where we define the rank function of $\mathcal L$ by
\begin{align*} 
	r_\mathcal L(\dot h_i):=\max\{r_{\mathcal L_\mathfrak A}(\dot f_i),r_{\mathcal L_\mathfrak B}(\dot g_i)\},\quad i\in I.
\end{align*} Moreover, we assume a set of {\em hedge variables}\footnote{See \citeA{Yamamoto01} and \citeA{Kutsia14}.} $\mathscr Z:=\{Z,Z_1,Z_2,\ldots\}$ which may be replaced by the {\em empty hedge} $\dot\lambda$ and which we will use to formally reduce the rank of a term --- for example, $\dot h(\dot a,\dot\lambda)$ will stand for $\dot h(\dot a)$. The use of hedge variables will be necessary in cases where function symbols of different ranks are to be generalized. In addition, we assume a set $\mathcal X=\{x,x_1,x_2,\ldots\}$ of {\em $\mathcal X$-variables} as placeholders for constant symbols \cite<see>{Antic22-3}. An {\em $\mathcal L$-hedge} \cite<cf.>{Yamamoto01,Kutsia14} is an $\mathcal L$-term possibly containing variables from
\begin{align*} 
	\mathcal{V:=X\cup Z}\cup\mathscr Z,
\end{align*} and we will use the notation $\tilde s\mathbf{[x](z,Z)}$ for $\mathcal L$-hedges. This situation can be depicted as follows:
\begin{center}
\begin{tikzpicture} 
	\node (L) {$\mathcal L$};
	\node (label) [right=of L] {...abstract hedges generalizing terms};
	\node (L') [below=of L] {};
	\node (L_A) [left=of L'] {$\mathcal L_\mathfrak A$};
	\node (L_B) [right=of L'] {$\mathcal L_\mathfrak B$};
	\node (label2) [right=of L_B] {...terms.};
	\draw (L) to (L_A);
	\draw (L) to (L_B);
\end{tikzpicture}
\end{center} We denote the set of all $\mathcal L$-hedges with variables from $\mathcal V$ by $\mathcal{H_L(V)}$. An {\em abstract} $\mathcal L$-hedge contains no constant symbols. We will use the language $\mathcal L$ to express $\mathcal L$-justifications of analogical proportions below.

An {\em $(\mathcal{L,L_\mathfrak A})$-substitution} is a function
\begin{align*} 
	\sigma_\mathfrak A:\mathcal{L\cup H_L(V)}\to\mathcal{L_\mathfrak A\cup T_{L_\mathfrak A\cup\{\dot\lambda\}}(X,Z)}
\end{align*} with signatures
\begin{align*}
	&(\sigma_\mathfrak A\upharpoonright\mathcal L):\mathcal L\to\mathcal L_\mathfrak A:\dot h_i\mapsto\dot f_i,\quad i\in I,\\
	&(\sigma_\mathfrak A\upharpoonright\mathcal X):\mathcal X\to\mathcal L_{\mathfrak A,0},\\
	&(\sigma_\mathfrak A\upharpoonright\mathcal Z):\mathcal Z\to\mathcal{T_{L_\mathfrak A}(X,Z)},\\
	&(\sigma_\mathfrak A\upharpoonright\mathscr Z):\mathscr Z\to\mathcal{T_{L_\mathfrak A\cup\{\dot\lambda\}}(X,Z)}.
\end{align*} which is the identity almost everywhere, extended to hedges inductively in the usual way. Notice that $\sigma_\mathfrak A$ is {\em fixed} on $\mathcal L$ and maps each function (and constant) symbol $\dot h_i$ to $\dot f_i$, for every $i\in I$. For example, we will write $\{\dot h_i/\dot f_i,x/\dot a,z/\dot f(\dot a),Z/\dot\lambda\}$ for the substitution that is the identity on every variable except for $\dot h_i,x,z$, and $Z$ which are mapped respectively to $\dot f_i$, $\dot a$, $\dot f(\dot a)$, and $\dot\lambda$. A substitution is {\em ground} iff it has the signature
\begin{align*} 
	\mathcal{L\cup H_L(V)}\to\mathcal{L_\mathfrak A\cup T_{L_\mathfrak A\cup\{\dot\lambda\}}(\emptyset)}.
\end{align*} We denote the set of all ground $(\mathcal{L,L_\mathfrak A})$-substitutions by $g\text-Sub(\mathcal{L,L_\mathfrak A})$. The application of an $(\mathcal{L,L_\mathfrak A})$-substitution $\sigma_\mathfrak A$ to an $\mathcal L$-hedge $\tilde s$ is written in postfix notation $\tilde s\sigma_\mathfrak A$ and is defined inductively as usual. 


\begin{example}\label{exa:h(z,Z)} Preparatory to \prettyref{exa:1_2__2_4}, consider the algebras
\begin{align*} 
	\mathfrak N:=(\mathbb N,S,0) \quad\text{and}\quad \mathfrak M:=(\mathbb N,+,1),
\end{align*} where $S:\mathbb{N\to N}$ is the unary {\em successor function} $S(a):=a+1$, for every $a\in\mathbb N$, and let (here ``$\dot\mathunderscore$'' is a dummy constant symbol)\footnote{The algebras $\mathfrak N$ and $\mathfrak M$ will be defined in \prettyref{exa:1_2__2_4}.}
\begin{align*} 
	\mathcal L_\mathfrak N:=\{\dot S,\dot 0\} \quad\text{and}\quad \mathcal L_\mathfrak M:=\{\dot +,\dot 1\} \quad\text{and}\quad \mathcal L:=\{\dot h,\dot\mathunderscore\}
\end{align*} with ranks
\begin{align*} 
	r(\dot S)&:=1\\
	r(\dot 0)&:=0\\
	r(\dot +)&:=2\\
	r(\dot 1)&:=0,\\
	r(\dot h)&\;=\max\{r(\dot S),r(\dot +)\}=2\\
	r(\dot\mathunderscore)&\;=\max\{r(\dot 0),r(\dot 1)\}=0.
\end{align*} Moreover, let $\mathcal X:=\emptyset$, $\mathcal Z:=\{z\}$, and $\mathscr Z:=\{Z\}$. Then the hedge $h(z,Z)$ generalizes both ground terms $\dot S(\dot 0)$ and $\dot 1\;\dot +\;\dot 1$ via
\begin{align*} 
	\sigma_\mathfrak N:=\{\dot h/\dot S,z/\dot 0,Z/\dot\lambda\} \quad\text{and}\quad \sigma_\mathfrak M:=\{\dot h/\dot +,z/\dot 1,Z/\dot 1\},
\end{align*} where $\dot +$ is written in infix notation for readability, and $\dot h(\dot h(z,Z),Z)$ generalizes $\dot S(\dot S(\dot 0))$ and $(\dot 1\;\dot +\;\dot 1)\;\dot +\;\dot 1$ again via $\sigma_\mathfrak N$ and $\sigma_\mathfrak M$. Formally, we obtain
\begin{align*} 
	 h(z,Z)\sigma_\mathfrak N=\dot S(\dot 0,\dot\lambda)=\dot S(\dot 0) \quad\text{and}\quad  h(z,Z)\sigma_\mathfrak M=\dot +(\dot 1,\dot 1)
\end{align*} and
\begin{align*} 
	 h(\dot h(z,Z),Z)\sigma_\mathfrak N=\dot S(\dot S(\dot 0,\dot\lambda),\dot\lambda)=\dot S(\dot S(\dot 0)) \quad\text{and}\quad \dot h(\dot h(z,Z),Z)\sigma_\mathfrak B=\dot +(\dot +(\dot 1,\dot 1),\dot 1).
\end{align*}
\end{example}

\begin{convention}\label{con:unilingual} We make the convention that in case $\mathcal L_\mathfrak A=\mathcal L_\mathfrak B$ and $r_{\mathcal L_\mathfrak A}=r_{\mathcal L_\mathfrak B}$ then $\mathcal L=\mathcal L_\mathfrak A=\mathcal L_\mathfrak B$ turning $\mathcal L$ into a ranked language with $r_\mathcal L=r_{\mathcal L_\mathfrak A}=r_{\mathcal L_\mathfrak B}$. This guarantees that our bilingual framework of this paper coincides with the unilingual one in \citeA{Antic22,Antic22-3} in case the underlying languages are identical.
\end{convention}


An {\em $\mathcal L$-justification} with variables among $\mathcal V$ is an expression of the form
\begin{align*} 
	\tilde s\mathbf{[x](z,Z)}\to\tilde t\mathbf{[x](z,Z)}
\end{align*} consisting of two abstract $\mathcal L$-hedges $\tilde s$ and $\tilde t$ with $\mathcal X$-variables $\mathbf x$, $\mathcal Z$-variables $\mathbf z$, hedge $\mathscr Z$-variables $\mathbf Z$, and an arrow $\to$ which by convention binds weaker than any other function symbol, where we require that every $\mathcal Z$-variable in $\tilde t$ occurs in $\tilde s$, that is, $\mathcal Z(\tilde t)\subseteq\mathcal Z(\tilde s)$. We denote the set of all such $\mathcal L$-justifications by $\mathcal{J_L(V)}$.

We are now ready to introduce the main notion of the paper by following the lines of \citeA{Antic22,Antic22-3}. Let $a,b\in\mathcal{T_{L_\mathfrak A}(\emptyset)}$ and $c,d\in\mathcal{T_{L_\mathfrak A}(\emptyset)}$ be ground terms standing for elements in $\mathfrak A$ and $\mathfrak B$, respectively. We define the {\em analogical proportion relation} in two steps:
\begin{enumerate}
	\item Define the {\em set of $\mathcal L$-justifications} of an {\em arrow} $a\to b$ by
	\begin{align*} 
		Jus_\mathfrak A(a\to b)
			:= \left\{\tilde s\mathbf{[x](z,Z)}\to\tilde t\mathbf{[x](z,Z)}\in \mathcal{J_L(V)} 
			\;\middle|\; 
			\begin{array}{c}
				a^\mathfrak A\to b^\mathfrak A=(\tilde s\sigma_\mathfrak A)^\mathfrak A\to (\tilde t\sigma_\mathfrak A)^\mathfrak A\\
				\sigma_\mathfrak A\in g\text-Sub(\mathcal{L,L_\mathfrak A})
			\end{array}
			\right\}
	\end{align*} extended to an {\em arrow proportion} $a\to b\righttherefore c\to d$ by
	\begin{align*} 
		Jus_{(\mathfrak{A,B})}&(a\to b\righttherefore c\to d)\\
			&:= 
			\left\{
				\begin{array}{c}
					\tilde s\mathbf{[x](z,Z)}\to \tilde t\mathbf{[x](z,Z)}\\
					\in\mathcal{J_L(V)}
				\end{array}
				\;\middle|\;
				\begin{array}{c}
					a^\mathfrak A\to b^\mathfrak A=(\tilde s\sigma_\mathfrak A)^\mathfrak A\to (\tilde t\sigma_\mathfrak A)^\mathfrak A\\
					c^\mathfrak A\to d^\mathfrak A=(\tilde s\sigma_\mathfrak B)^\mathfrak B\to (\tilde t\sigma_\mathfrak B)^\mathfrak B\\
					\sigma_\mathfrak A\in g\text-Sub(\mathcal{L,L_\mathfrak A})\\
					\sigma_\mathfrak B\in g\text-Sub(\mathcal{L,L_\mathfrak B})\\
					(\sigma_\mathfrak A\upharpoonright\mathcal X)=(\sigma_\mathfrak B\upharpoonright\mathcal X)
				\end{array}
			\right\}.
	\end{align*} We call an $\mathcal L$-justification {\em trivial in $\mathfrak A$} iff it is contained in {\em every} $Jus_\mathfrak A(a\to b)$, for {\em all} $a,b\in A$; moreover, we call an $\mathcal L$-justification {\em trivial in $(\mathfrak{A,B})$} iff it is trivial in $\mathfrak A$ and $\mathfrak B$,\footnote{For example, $z\to x$ and $x_1\to x_2$ are trivial in every pair of algebras.} and we say that $J$ is a {\em trivial set of $\mathcal L$-justifications} in $(\mathfrak{A,B})$ iff every $\mathcal L$-justification in $J$ is trivial in $(\mathfrak{A,B})$. We say that $a\to b \righttherefore c\to d$ {\em holds} in $(\mathfrak{A,B})$ --- in symbols,
	\begin{align*} 
		(\mathfrak{A,B})\models a\to b \righttherefore c\to d,
	\end{align*} iff
	\begin{enumerate}
		\item either $Jus_\mathfrak A(a\to b)\cup Jus_\mathfrak B(c\to d)$ consists only of trivial $\mathcal L$-justifications, in which case there is neither a non-trivial transformation of $a$ into $b$ in $\mathfrak A$ nor of $c$ into $d$ in $\mathfrak B$; or

		\item $Jus_{(\mathfrak{A,B})}(a\to b\righttherefore c\to d)$ is maximal with respect to subset inclusion among the sets $Jus_{(\mathfrak{A,B})}(a\to b\righttherefore c\to d')$, $d'\in\mathcal{T_{L_\mathfrak A}(\emptyset)}$ with ${d'}^\mathfrak B\neq d^\mathfrak B$, containing at least one non-trivial $\mathcal L$-justification, that is, for any such ground term $d'$,\footnote{In what follows, we will usually omit trivial $\mathcal L$-justifications from notation. So, for example, we will write $Jus_{(\mathfrak{A,B})}(a\to b\righttherefore c\to d)=\emptyset$ instead of $Jus_{(\mathfrak{A,B})}(a\to b\righttherefore c\to d)=\{\text{trivial $\mathcal L$-justifications}\}$ in case $a\to b\righttherefore c\to d$ has only trivial $\mathcal L$-justifications in $(\mathfrak{A,B})$, et cetera. The empty set is always a trivial set of $\mathcal L$-justifications. Every $\mathcal L$-justification is meant to be non-trivial unless stated otherwise. Moreover, we will always write sets of $\mathcal L$-justifications modulo renaming of variables, that is, we will write $\{z\to z\}$ instead of $\{z\to z\mid z\in Z\}$ et cetera.}
		\begin{align*} 
		    \emptyset\subsetneq Jus_{(\mathfrak{A,B})}(a\to b\righttherefore c\to d)&\subseteq Jus_{(\mathfrak{A,B})}(a\to b\righttherefore c\to d')
		\end{align*} implies
		\begin{align*} 
		    \emptyset\subsetneq Jus_{(\mathfrak{A,B})}(a\to b\righttherefore c\to d')\subseteq Jus_{(\mathfrak{A,B})}(a\to b\righttherefore c\to d).
		\end{align*}
	\end{enumerate} We abbreviate the above definition by simply saying that $Jus_{(\mathfrak{A,B})}(a\to b\righttherefore c\to d)$ is {\em d-maximal}.

	\item Finally, the analogical proportion relation is most succinctly defined by the following derivation:
    \begin{prooftree}
    	\AxiomC{$a\to b\righttherefore c\to d\qquad b\to a\righttherefore d\to c$}
    	\AxiomC{$c\to d\righttherefore a\to b\qquad d\to c\righttherefore b\to a$}
    	\BinaryInfC{$a:b::c:d$.}
    \end{prooftree} This means that in order to prove $(\mathfrak{A,B})\models a:b::c:d$, we need to check the four relations in the first line with respect to $\models$ in $(\mathfrak{A,B})$. The set of all analogical proportions which hold in $(\mathfrak{A,B})$ is denoted by $AP(\mathfrak{A,B})$, that is,
    \begin{align*} 
		AP(\mathfrak{A,B}):=\left\{a:b::c:d \;\middle|\; (\mathfrak{A,B})\models a:b::c:d\right\}.
	\end{align*}
\end{enumerate} We will always write $\mathfrak A$ instead of $(\mathfrak{A,A})$.

\begin{example}\label{exa:1_2__2_4} This example is a continuation of \prettyref{exa:h(z,Z)}. The arrow proportion
\begin{align*} 
	\dot S(\dot 0)\to\dot S(\dot S(\dot 0)) \righttherefore \dot 1\;\dot +\;\dot 1\to \dot 1\;\dot +\;\dot 1\;\dot +\;\dot 1
\end{align*} stands for
\begin{align*} 
	1\to 2 \righttherefore 2\to 3
\end{align*} and has the justification
\begin{align*} 
	\dot h(z,Z)\to\dot h(\dot h(z,Z),Z)
\end{align*} in $(\mathfrak{N,M})$ since
\begin{align*} 
	1\to 2&=(\dot h(z,Z)\{\dot h/\dot S,z/\dot 0,Z/\dot\lambda\})^\mathfrak N\to (\dot h(\dot h(z,Z),Z)\{\dot h/\dot S,z/\dot 0,Z/\dot\lambda\})^\mathfrak N,\\
	2\to 3&=(\dot h(z,Z)\{\dot h/\dot +,z/\dot 1,Z/\dot 1\})^\mathfrak M\to (\dot h(\dot h(z,Z),Z)\{\dot h/\dot +,z/\dot 1,Z/\dot 1\})^\mathfrak M.
\end{align*} This can be depicted as follows:
\begin{center}
\begin{tikzpicture}[node distance=1cm and 0.5cm]
\node (a)               {$\dot S(\dot 0)$};
\node (d1) [right=of a] {$\to$};
\node (b) [right=of d1] {$\dot S(\dot S(\dot 0))$};
\node (d2) [right=of b] {$\righttherefore$};
\node (c) [right=of d2] {$\dot 1\;\dot +\;\dot 1$};
\node (d3) [right=of c] {$\to$};
\node (d) [right=of d3] {$(\dot 1\;\dot +\;\dot 1)\;\dot +\;\dot 1$.};
\node (s) [below=of b] {$\dot h(z,Z)$};
\node (t) [above=of c] {$\dot h(\dot h(z,Z),Z)$};

\draw (a) to [edge label'={$\{\dot h/\dot S,z/\dot 0,Z/\dot\lambda\}$}] (s); 
\draw (c) to [edge label={$\{\dot h/\dot +,z/\dot 1,Z/\dot 1\}$}] (s);
\draw (b) to [edge label={$\{\dot h/\dot S,z/\dot 0,Z/\dot\lambda\}$}] (t);
\draw (d) to [edge label'={$\{\dot h/\dot +,z/\dot 1,Z/\dot 1\}$}] (t);
\end{tikzpicture}
\end{center} In \prettyref{exa:char}, we will see that this justification is in fact a ``characteristic'' one thus yielding
\begin{align*} 
	(\mathfrak{N,M})\models \dot S(\dot 0)\to\dot S(\dot S(\dot 0)) \righttherefore \dot 1\;\dot +\;\dot 1\to \dot 1\;\dot +\;\dot 1\;\dot +\;\dot 1.
\end{align*} Of course, this arrow proportion cannot be entailed in the unilingual framework since $S$ and $+$ have different ranks!
\end{example}

\section{Properties}\label{sec:Properties}

Recall that the {\em clone} of an $\mathcal L$-algebra $\mathfrak A$ is given by
\begin{align*} 
	Clo(\mathfrak A):= \left\{s^\mathfrak A \;\middle|\; s\in\mathcal{T_L(Z)}\right\}.
\end{align*} The following observation follows immediately from definitions,\todo{check} and it says that the framework is invariant under identical clones:

\begin{fact} We have the following implication:
\begin{prooftree}
	\AxiomC{$Clo(\mathfrak A)=Clo(\mathfrak A')$}
	\AxiomC{$Clo(\mathfrak B)=Clo(\mathfrak B')$}
	\BinaryInfC{$AP(\mathfrak{A,B})=AP(\mathfrak{A',B'})$.}
\end{prooftree}
\end{fact}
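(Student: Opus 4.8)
The plan is to show that every ingredient entering the definition of $AP(\mathfrak{A,B})$ depends on the pair $(\mathfrak{A,B})$ only through the clones $Clo(\mathfrak A)$ and $Clo(\mathfrak B)$. Unfolding the two-step definition, the relation $a:b::c:d$ is produced from the four arrow proportions by the displayed four-premise derivation rule; each arrow proportion is decided by the $d$-maximality condition on the family of sets $Jus_{(\mathfrak{A,B})}(a\to b\righttherefore c\to d')$ together with the triviality predicate; and these sets are in turn assembled from the arrow-justification sets $Jus_\mathfrak A(a\to b)$ and $Jus_\mathfrak B(c\to d)$ subject to the matching condition $(\sigma_\mathfrak A\upharpoonright\mathcal X)=(\sigma_\mathfrak B\upharpoonright\mathcal X)$. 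Hence it suffices to establish the single invariance statement that, for all ground terms $a,b$, the set $Jus_\mathfrak A(a\to b)$ is determined by $Clo(\mathfrak A)$ alone (and symmetrically for $\mathfrak B$): once this is known, the matching of $\mathcal X$-substitutions and triviality transfer verbatim, and $d$-maximality, being a purely set-theoretic condition on the resulting families, is preserved.

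For that statement I would fix a justification $J=\tilde s\to\tilde t$ and analyse membership $J\in Jus_\mathfrak A(a\to b)$ directly, treating first the case where $J$ contains no $\mathcal X$-variables. A ground substitution $\sigma_\mathfrak A$ acts in two stages. Its fixed part $\dot h_i\mapsto\dot f_i$, together with a choice of which hedge variables are sent to $\dot\lambda$, turns $\tilde s$ and $\tilde t$ into well-formed $\mathcal L_\mathfrak A$-terms; reading the surviving hedge variables as additional $\mathcal Z$-variables, these are term operations $w_{\tilde s}^\mathfrak A,w_{\tilde t}^\mathfrak A\in Clo(\mathfrak A)$. Its remaining part sends each $\mathcal Z$-variable and each non-empty hedge variable to a variable-free $\mathcal L_\mathfrak A$-term, whose value ranges exactly over the nullary members of $Clo(\mathfrak A)$, that is, over the set $A_0$ of values of variable-free $\mathcal L_\mathfrak A$-terms. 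Consequently the set of realizable pairs $\{((\tilde s\sigma_\mathfrak A)^\mathfrak A,(\tilde t\sigma_\mathfrak A)^\mathfrak A)\mid\sigma_\mathfrak A\in g\text-Sub(\mathcal{L,L_\mathfrak A})\}$ is computed from $Clo(\mathfrak A)$ only, so the defining condition $a^\mathfrak A\to b^\mathfrak A=(\tilde s\sigma_\mathfrak A)^\mathfrak A\to(\tilde t\sigma_\mathfrak A)^\mathfrak A$ is clone-invariant on the $\mathcal X$-free fragment. Feeding $Clo(\mathfrak A)=Clo(\mathfrak A')$ and $Clo(\mathfrak B)=Clo(\mathfrak B')$ into the reduction of the first paragraph then yields the desired equality of analogical-proportion relations, at least in so far as only $\mathcal X$-free justifications are involved.

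The hard part, and the reason this deserves the author's caution, is precisely the role of the $\mathcal X$-variables. Their images under $\sigma_\mathfrak A$ are constant symbols, so the values they may assume form the set $D_\mathfrak A=\{\dot c^\mathfrak A\mid\dot c\in\mathcal L_{\mathfrak A,0}\}$ of distinguished elements, which is only a subset of the nullary clone $A_0$ and is in general not recoverable from $Clo(\mathfrak A)$: the algebras $(\mathbb N,S,0)$ and $(\mathbb N,S,0,1)$ share one clone yet have $D=\{0\}$ versus $D=\{0,1\}$, and the justification $x\to\dot S(x)$ lies in $Jus_{(\mathbb N,S,0,1)}(1\to 2)$ but not in $Jus_{(\mathbb N,S,0)}(1\to 2)$. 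I would therefore expect the clean argument above to go through unconditionally only for justifications free of $\mathcal X$-variables, and that to cover the general case one must either strengthen the hypothesis so that $\mathfrak A,\mathfrak A'$ (and $\mathfrak B,\mathfrak B'$) realize the same distinguished elements, or else prove that each $\mathcal X$-variable justification can be emulated, without affecting $d$-maximality, by $\mathcal Z$-variable justifications available in both algebras. The latter route is delicate because the synchronization constraint $(\sigma_\mathfrak A\upharpoonright\mathcal X)=(\sigma_\mathfrak B\upharpoonright\mathcal X)$ couples the source and target choices, so any such emulation must be carried out simultaneously on both sides of the bilingual proportion.
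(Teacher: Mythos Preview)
The paper offers no proof of this Fact at all: it is stated as ``follows immediately from definitions'' and is even flagged with a \texttt{\textbackslash todo\{check\}} by the author. Your analysis is therefore far more careful than anything in the paper, and your reduction in the first paragraph --- that it suffices to show each $Jus_\mathfrak A(a\to b)$ depends on $\mathfrak A$ only through $Clo(\mathfrak A)$ --- is exactly the right decomposition. For the $\mathcal X$-free fragment your argument is correct and complete.

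More importantly, the concern you raise about $\mathcal X$-variables is a genuine issue that the paper does not acknowledge. Your example is valid: $(\mathbb N,S,0)$ and $(\mathbb N,S,0,1)$ have the same clone since $1=S(0)$ is a nullary term operation in both, yet $x\to\dot S(x)$ lies in $Jus_{(\mathbb N,S,0,1)}(1\to 2)$ but not in $Jus_{(\mathbb N,S,0)}(1\to 2)$, because $(\sigma\upharpoonright\mathcal X)$ ranges over $\mathcal L_{\mathfrak A,0}$ rather than over the nullary clone. So the intermediate object $Jus_\mathfrak A(a\to b)$ is demonstrably \emph{not} clone-invariant, and the ``immediate from definitions'' claim is at best incomplete. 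Whether this discrepancy propagates all the way to $AP(\mathfrak{A,B})$ --- i.e.\ whether the Fact is actually false as stated or merely lacks a proof --- is exactly the open question you isolate at the end: one would need either to show that every $\mathcal X$-justification is shadowed by a $\mathcal Z$-justification in a way that preserves $d$-maximality (your second route), or to exhibit a proportion whose truth value flips. Note that the synchronization constraint $(\sigma_\mathfrak A\upharpoonright\mathcal X)=(\sigma_\mathfrak B\upharpoonright\mathcal X)$ makes $\mathcal X$-variables behave like shared rigid constants across the proportion, which limits their power to cause damage, but does not obviously eliminate it. In short: you have not merely failed to reproduce the paper's proof --- you have uncovered that there is no proof, and have located precisely where one would be needed.
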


Moreover, the framework is invariant with respect to semantic equivalence of ground terms:

\begin{fact}\label{fact:ground} Given ground terms $a,b,a',b'\in\mathcal{T_{L_\mathfrak A}(\emptyset)}$ and $c,d,c',d'\in\mathcal{T_{L_\mathfrak B}(\emptyset)}$ satisfying
\begin{align*} 
	a^\mathfrak A={a'}^\mathfrak A \quad\text{and}\quad b^\mathfrak A={b'}^\mathfrak A \quad\text{and}\quad c^\mathfrak B={c'}^\mathfrak B \quad\text{and}\quad d^\mathfrak B={d'}^\mathfrak B,
\end{align*} then
\begin{align*} 
	(\mathfrak{A,B})\models a:b::c:d \quad\Leftrightarrow\quad (\mathfrak{A,B})\models a':b'::c':d'.
\end{align*}
\end{fact}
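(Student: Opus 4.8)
The plan is to observe that the ground terms $a,b,c,d$ enter every clause of the definition of the analogical proportion relation \emph{only} through their denotations $a^\mathfrak A,b^\mathfrak A,c^\mathfrak B,d^\mathfrak B$, so that equal denotations force the whole construction to agree. The single fact I would isolate and then reuse is that the justification sets are \emph{functions of the denotations}: $Jus_\mathfrak A(a\to b)$ depends on $a,b$ only via the pair $(a^\mathfrak A,b^\mathfrak A)$, the set $Jus_\mathfrak B(c\to d)$ depends on $c,d$ only via $(c^\mathfrak B,d^\mathfrak B)$, and $Jus_{(\mathfrak{A,B})}(a\to b\righttherefore c\to d)$ depends on the four terms only via $(a^\mathfrak A,b^\mathfrak A,c^\mathfrak B,d^\mathfrak B)$.

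First I would read this off directly from the defining membership conditions. In the definition of $Jus_\mathfrak A(a\to b)$ the terms $a,b$ occur only inside $a^\mathfrak A$ and $b^\mathfrak A$ on the left-hand side of the equation $a^\mathfrak A\to b^\mathfrak A=(\tilde s\sigma_\mathfrak A)^\mathfrak A\to(\tilde t\sigma_\mathfrak A)^\mathfrak A$, while the right-hand side ranges over $\sigma_\mathfrak A\in g\text-Sub(\mathcal{L,L_\mathfrak A})$ independently of $a,b$; hence the hypotheses $a^\mathfrak A={a'}^\mathfrak A$ and $b^\mathfrak A={b'}^\mathfrak A$ yield $Jus_\mathfrak A(a\to b)=Jus_\mathfrak A(a'\to b')$ verbatim, and symmetrically $Jus_\mathfrak B(c\to d)=Jus_\mathfrak B(c'\to d')$. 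Reading the two defining equations of $Jus_{(\mathfrak{A,B})}$ the same way gives
\begin{align*}
	Jus_{(\mathfrak{A,B})}(a\to b\righttherefore c\to d)=Jus_{(\mathfrak{A,B})}(a'\to b'\righttherefore c'\to d').
\end{align*}

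Second, I would propagate these equalities through the relation $\models$ for arrow proportions by checking its two alternatives. Alternative (a) asks whether $Jus_\mathfrak A(a\to b)\cup Jus_\mathfrak B(c\to d)$ consists only of trivial justifications; triviality is a property of the set alone, and this set is unchanged, so (a) holds for $(a,b,c,d)$ iff it holds for $(a',b',c',d')$. For alternative (b), the $d$-maximality condition quantifies over competitor terms $e$ subject to $e^\mathfrak B\neq d^\mathfrak B$; since $d^\mathfrak B={d'}^\mathfrak B$, this is literally the same \emph{range} of competitors as $e^\mathfrak B\neq{d'}^\mathfrak B$, and for each such $e$ we have $Jus_{(\mathfrak{A,B})}(a\to b\righttherefore c\to e)=Jus_{(\mathfrak{A,B})}(a'\to b'\righttherefore c'\to e)$ by the first step. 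Thus the two inclusions defining $d$-maximality are term-by-term identical for the two quadruples, and $(\mathfrak{A,B})\models a\to b\righttherefore c\to d$ iff $(\mathfrak{A,B})\models a'\to b'\righttherefore c'\to d'$.

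Finally, I would lift this to the full proportion. By the defining derivation, $a:b::c:d$ holds iff the four arrow proportions $a\to b\righttherefore c\to d$, $b\to a\righttherefore d\to c$, $c\to d\righttherefore a\to b$, and $d\to c\righttherefore b\to a$ all hold; each is obtained from the first by permuting and reversing the roles of the four terms together with an interchange of the source and target algebras, so each depends on its arguments only through the \emph{same} four denotations $a^\mathfrak A,b^\mathfrak A,c^\mathfrak B,d^\mathfrak B$, all of which are preserved by hypothesis. Applying the previous paragraph to each of the four arrow proportions and conjoining gives $(\mathfrak{A,B})\models a:b::c:d$ iff $(\mathfrak{A,B})\models a':b'::c':d'$. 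The only point that needs genuine care --- rather than a one-line appeal to the definitions --- is the competitor quantifier in $d$-maximality: one must record that replacing $d$ by $d'$ leaves the constraint $e^\mathfrak B\neq d^\mathfrak B$ unchanged, so that the two maximality statements quantify over exactly the same competitors $e$, and not merely compare equal sets for each fixed $e$; everything else is a direct substitution of equal denotations into the defining conditions.
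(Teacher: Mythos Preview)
Your argument is correct and is essentially what the paper intends: the paper states this as a \emph{Fact} without proof, treating it as immediate from the definitions, and your unfolding of the definitions---showing that every clause (the justification sets, triviality, the $d$-maximality quantifier, and the four arrow proportions) depends on $a,b,c,d$ only through $a^\mathfrak A,b^\mathfrak A,c^\mathfrak B,d^\mathfrak B$---is exactly the verification behind that claim. Your explicit note that the competitor constraint $e^\mathfrak B\neq d^\mathfrak B$ is preserved under $d\mapsto d'$ is the one place worth making explicit, and you handle it correctly.
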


\subsection*{Characteristic justifications}

Computing all $\mathcal L$-justifications of an arrow proportion is difficult in general, which fortunately can be omitted in many cases. The following definition is essentially the same as \citeS[Definition 20]{Antic22}. We call a set $J$ of $\mathcal L$-justifications a {\em characteristic set of $\mathcal L$-justifications} of $a\to b\righttherefore c\to d$ in $(\mathfrak{A,B})$ iff $J$ is a sufficient set of $\mathcal L$-justifications of $a\to b\righttherefore c\to d$ in $(\mathfrak{A,B})$, that is, iff
\begin{enumerate}
	\item $J\subseteq Jus_{(\mathfrak{A,B})}(a\to b\righttherefore c\to d)$, and
	\item $J\subseteq Jus_{(\mathfrak{A,B})}(a\to b\righttherefore c\to d')$ implies $d'=d$, for each $d'\in\mathcal{T_{L_\mathfrak B}(\emptyset)}$.
\end{enumerate} In case $J=\{\tilde s\to\tilde t\}$ is a singleton set satisfying both conditions, we call $\tilde s\to\tilde t$ a {\em characteristic $\mathcal L$-justification} of $a\to b\righttherefore c\to d$ in $(\mathfrak{A,B})$.

\begin{example}\label{exa:char} Recall the situation in \prettyref{exa:1_2__2_4}. Since $\sigma_\mathfrak M=\{\dot h/\dot +,z/\dot 1,Z/\dot 1\}$ is the {\em unique} substitution yielding
\begin{align*} 
	\dot h(z,Z)\sigma_\mathfrak M=\dot 1\;\dot +\;\dot 1,
\end{align*} we deduce that
\begin{align*} 
	\dot h(z,Z)\to\dot h(\dot h(z,Z),Z)
\end{align*} is indeed a {\em characteristic} justification of
\begin{align*} 
	\dot S(\dot 0)\to\dot S(\dot S(\dot 0)) \righttherefore \dot 1\;\dot +\;\dot 1\to \dot 1\;\dot +\;\dot 1\;\dot +\;\dot 1
\end{align*} in $(\mathfrak{N,M})$.
\end{example}

The following result provides a sufficient condition of characteristic $\mathcal L$-justifications. It is a generalization of the Extended Uniqueness Lemma in \citeA{Antic22-3} to the bilingual setting --- it is almost identical to the original characterization which demonstrates the robustness of the underlying framework!

\begin{lemma}[Bilingual Uniqueness Lemma]\label{lem:BUL} Let $\tilde s\mathbf{[x](z,Z)}\to \tilde t\mathbf{[x](z,Z)}$ be a non-trivial $\mathcal L$-justification of $a\to b\righttherefore c\to d$ in $(\mathfrak{A,B})$.
\begin{enumerate}
	\item If
	\begin{align}\label{equ:UL1} 
		c^\mathfrak B=(\tilde s\sigma_\mathfrak B)^\mathfrak B \quad\text{implies}\quad d^\mathfrak B=(\tilde t\sigma_\mathfrak B)^\mathfrak B,
	\end{align} for all $\sigma_\mathfrak B\in g\text-Sub(\mathcal{L,L_\mathfrak B})$, then $\tilde s\to \tilde t$ is a {\em characteristic} $\mathcal L$-justification of $a\to b\righttherefore c\to d$ in $(\mathfrak{A,B})$, that is,
	\begin{align*} 
		(\mathfrak{A,B})\models a\to b\righttherefore c\to d.
	\end{align*}

	\item Consequently, if
	\begin{align*} 
		c^\mathfrak B=(\tilde s\sigma_\mathfrak B)^\mathfrak B \quad\Leftrightarrow\quad d^\mathfrak B=(\tilde s\sigma_\mathfrak B)^\mathfrak B,
	\end{align*} for all $\sigma_\mathfrak B\in g\text-Sub(\mathcal{L,L_\mathfrak B})$, and every $\mathcal Z$-variable in $\tilde s$ occurs in $\tilde t$ (recall that every $\mathcal Z$-variable in $\tilde t$ occurs in $\tilde s$ by assumption), then
	\begin{align*} 
		(\mathfrak{A,B})\models a\to b\righttherefore c\to d \quad\text{and}\quad (\mathfrak{A,B})\models b\to a\righttherefore d\to c.
	\end{align*}
	\item Consequently, if
	\begin{align*}
		&a^\mathfrak A=(\tilde s\sigma_\mathfrak A)^\mathfrak A \quad\Leftrightarrow\quad b^\mathfrak A=(\tilde t\sigma_\mathfrak A)^\mathfrak B,\\
		&c^\mathfrak B=(\tilde s\sigma_\mathfrak B)^\mathfrak B \quad\Leftrightarrow\quad d^\mathfrak B=(\tilde t\sigma_\mathfrak B)^\mathfrak B,
	\end{align*} for all $\sigma_\mathfrak A\in g\text-Sub(\mathcal{L,L_\mathfrak A})$ and $\sigma_\mathfrak B\in g\text-Sub(\mathcal{L,L_\mathfrak B})$, and every $\mathcal Z$-variable in $\tilde s$ occurs in $\tilde t$ (recall that every $\mathcal Z$-variable in $\tilde t$ occurs in $\tilde s$ by assumption), then
	\begin{align*} 
		(\mathfrak{A,B})\models a:b::c:d.
	\end{align*}
\end{enumerate}
\end{lemma}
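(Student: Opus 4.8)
The engine of the whole lemma is part~(1), which I would prove by first showing that $\tilde s\to\tilde t$ is a \emph{characteristic} $\mathcal L$-justification and then unwinding the definition of $\models$. Membership (condition~(1) of characteristicness, i.e.\ $\tilde s\to\tilde t\in Jus_{(\mathfrak{A,B})}(a\to b\righttherefore c\to d)$) is just the standing hypothesis that $\tilde s\to\tilde t$ is a justification. For condition~(2), suppose $\tilde s\to\tilde t\in Jus_{(\mathfrak{A,B})}(a\to b\righttherefore c\to d')$; then some ground $\sigma_\mathfrak B$ witnesses $c^\mathfrak B=(\tilde s\sigma_\mathfrak B)^\mathfrak B$ and $d'^\mathfrak B=(\tilde t\sigma_\mathfrak B)^\mathfrak B$, so \prettyref{equ:UL1} forces $d'^\mathfrak B=(\tilde t\sigma_\mathfrak B)^\mathfrak B=d^\mathfrak B$, i.e.\ $d'=d$ up to semantic equality by \prettyref{fact:ground}. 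To pass from ``characteristic'' to $\models$ I would verify the d-maximality clause \emph{vacuously}: if $\emptyset\subsetneq Jus_{(\mathfrak{A,B})}(a\to b\righttherefore c\to d)\subseteq Jus_{(\mathfrak{A,B})}(a\to b\righttherefore c\to d')$ for some $d'$ with $d'^\mathfrak B\neq d^\mathfrak B$, then the characteristic justification $\tilde s\to\tilde t$ lies in the larger set, whence $d'^\mathfrak B=d^\mathfrak B$ by condition~(2) --- a contradiction; so the premise of the implication never holds and case~(b) of $\models$ applies.

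For part~(2) I would invoke part~(1) twice. The forward direction of the biconditional is precisely hypothesis \prettyref{equ:UL1}, so part~(1) gives $(\mathfrak{A,B})\models a\to b\righttherefore c\to d$ outright. For the second proportion I pass to the reversed justification $\tilde t\to\tilde s$. The added assumption $\mathcal Z(\tilde s)\subseteq\mathcal Z(\tilde t)$ together with the standing $\mathcal Z(\tilde t)\subseteq\mathcal Z(\tilde s)$ gives $\mathcal Z(\tilde s)=\mathcal Z(\tilde t)$, which makes $\tilde t\to\tilde s$ a well-formed $\mathcal L$-justification; reusing the witnessing substitutions of $\tilde s\to\tilde t$ read backwards shows $\tilde t\to\tilde s\in Jus_{(\mathfrak{A,B})}(b\to a\righttherefore d\to c)$. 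Since triviality is a universal statement over \emph{all} pairs, exchanging the two coordinates shows that $\tilde t\to\tilde s$ is non-trivial iff $\tilde s\to\tilde t$ is (and $\mathcal Z(\tilde s)=\mathcal Z(\tilde t)$ keeps both orientations well-formed throughout). The backward direction of the biconditional is exactly \prettyref{equ:UL1} for $\tilde t\to\tilde s$ relative to $b\to a\righttherefore d\to c$, so part~(1) delivers $(\mathfrak{A,B})\models b\to a\righttherefore d\to c$.

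For part~(3) --- the final statement --- the key observation is that the two hypotheses are symmetric under the exchange $(\mathfrak A,a,b)\leftrightarrow(\mathfrak B,c,d)$, and that the framework treats the two algebras symmetrically (in particular ``trivial in $(\mathfrak{A,B})$'' and ``trivial in $(\mathfrak{B,A})$'' coincide). By the derivation rule defining $a:b::c:d$ it suffices to establish the four arrow proportions. Applying part~(2) with the $\mathfrak B$-biconditional yields the first two, $(\mathfrak{A,B})\models a\to b\righttherefore c\to d$ and $(\mathfrak{A,B})\models b\to a\righttherefore d\to c$. For the remaining two I would apply part~(2) to the swapped pair $(\mathfrak{B,A})$, with the same $\tilde s\to\tilde t$ and the proportion $c\to d\righttherefore a\to b$: the $\mathfrak A$-biconditional now plays the role of the hypothesis, $\mathcal Z(\tilde s)\subseteq\mathcal Z(\tilde t)$ still holds, and $\tilde s\to\tilde t$ remains a non-trivial justification of $c\to d\righttherefore a\to b$ in $(\mathfrak{B,A})$ (the same substitutions serve, now with $\mathfrak B$ read first). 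This gives $(\mathfrak{B,A})\models c\to d\righttherefore a\to b$ and $(\mathfrak{B,A})\models d\to c\righttherefore b\to a$, which are exactly the last two premises; feeding all four into the rule yields $(\mathfrak{A,B})\models a:b::c:d$.

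I expect the main obstacle to be bookkeeping rather than depth. One must be careful that the two ``reversed'' arrow proportions are genuinely evaluated in the swapped pair $(\mathfrak{B,A})$ --- indeed the languages of $c,d$ and of $a,b$ \emph{force} this, since $Jus_{(\mathfrak{A,B})}(c\to d\righttherefore a\to b)$ would be ill-typed --- and that every hypothesis of part~(2) survives both the orientation reversal and the algebra swap: well-formedness via $\mathcal Z(\tilde s)=\mathcal Z(\tilde t)$, membership as a justification, and non-triviality. The one genuinely non-formal point lurks inside part~(1), namely the claim that the d-maximality implication holds vacuously; once condition~(2) of characteristicness is in hand this is immediate, but it is precisely where the definition of $\models$ is actually used.
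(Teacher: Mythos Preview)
Your proof is correct and follows essentially the same approach as the paper's: the paper's own proof merely asserts that \prettyref{equ:UL1} forces $\tilde s\to\tilde t$ to justify only $a\to b\righttherefore c\to d$ (i.e.\ characteristicness), and then declares parts~(2) and~(3) ``immediate consequences of the first.'' Your write-up simply unpacks those immediate consequences---the vacuous d-maximality, the passage to the reversed justification $\tilde t\to\tilde s$ via $\mathcal Z(\tilde s)=\mathcal Z(\tilde t)$, and the algebra swap $(\mathfrak{A,B})\leftrightarrow(\mathfrak{B,A})$---in exactly the intended way.
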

\begin{proof} The implication in \prettyref{equ:UL1} guarantees that $\tilde s\mathbf{[x](z,Z)}\to \tilde t\mathbf{[x](z,Z)}$ only justifies the arrow proportion $a\to b\righttherefore c\to d$ and no other arrow proportion $a\to b\righttherefore c\to d'$, for some ground term $d'\in\mathcal{T_{L_\mathfrak A}(\emptyset)}$ with ${d'}^\mathfrak B\neq d^\mathfrak B$, in $(\mathfrak{A,B})$. All the other statements are immediate consequences of the first.
\end{proof}

The next theorem studies functional proportions of the form $a:t(a)::c:t(c)$ for some ``transformation'' $t$ and it is a generalization of the Functional Proportion Theorem of \citeA{Antic22,Antic22-3}.

\begin{theorem}[Bilingual Functional Proportion Theorem]\label{thm:BFPT} Let $\tilde t[\mathbf x](z,\mathbf Z)$ be an abstract $\mathcal L$-hedge containing the single $\mathcal Z$-variable $z$, and let $a\in\mathcal{T_{L_\mathfrak A}(\emptyset)}$ and $c\in\mathcal{T_{L_\mathfrak B}(\emptyset)}$ be ground terms. If
\begin{align*}
	(\tilde t\sigma_{\mathfrak B,z\mapsto c})^\mathfrak B=(\tilde t\sigma'_{\mathfrak B,z\mapsto c})^\mathfrak B,
\end{align*} for all $\sigma_{\mathfrak B,z\mapsto c},\sigma'_{\mathfrak B,z\mapsto c}\in g\text-Sub(\mathcal{L,L_\mathfrak B})$, then the $\mathcal L$-justification
\begin{align*} 
	z\to\tilde t[\mathbf x](z,\mathbf Z)
\end{align*} characteristically justifies
\begin{align*}
    (\mathfrak{A,B})\models a\to\tilde t\sigma_{\mathfrak A,z\mapsto a}\righttherefore c\to\tilde t\sigma_{\mathfrak B,z\mapsto c},
\end{align*} for all $\sigma_{\mathfrak A,z\mapsto a}\in g\text-Sub(\mathcal{L,L_\mathfrak A})$ and $\sigma_{\mathfrak B,z\mapsto c}\in g\text-Sub(\mathcal{L,L_\mathfrak B})$ with $(\sigma_\mathfrak A\upharpoonright\mathcal X)=(\sigma_\mathfrak B\upharpoonright\mathcal X)$.
\end{theorem}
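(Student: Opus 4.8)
The plan is to reduce everything to part~(1) of the Bilingual Uniqueness Lemma (\prettyref{lem:BUL}), taking $\tilde s:=z$ and setting $b:=\tilde t\sigma_{\mathfrak A,z\mapsto a}$ and $d:=\tilde t\sigma_{\mathfrak B,z\mapsto c}$ for the fixed pair of substitutions under consideration. Note first that the well-definedness hypothesis makes $d^\mathfrak B$ the same for every admissible choice of $\sigma_{\mathfrak B,z\mapsto c}$, so the target value is determined and, by the framework's invariance under semantic equivalence (\prettyref{fact:ground}), the statement is unambiguous. Since $\mathcal Z(\tilde t)=\{z\}\subseteq\{z\}=\mathcal Z(z)$, the expression $z\to\tilde t[\mathbf x](z,\mathbf Z)$ is a legitimate $\mathcal L$-justification. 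To invoke \prettyref{lem:BUL}(1) I then need three facts: that $z\to\tilde t$ lies in $Jus_{(\mathfrak{A,B})}(a\to b\righttherefore c\to d)$, that it is non-trivial, and that the implication \prettyref{equ:UL1} holds.

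First I would verify membership, which is essentially immediate. Using the very substitutions $\sigma_{\mathfrak A,z\mapsto a}$ and $\sigma_{\mathfrak B,z\mapsto c}$ from the statement, we have $z\sigma_{\mathfrak A,z\mapsto a}=a$ and $z\sigma_{\mathfrak B,z\mapsto c}=c$, hence $(z\sigma_\mathfrak A)^\mathfrak A=a^\mathfrak A$ and $(z\sigma_\mathfrak B)^\mathfrak B=c^\mathfrak B$, while $(\tilde t\sigma_\mathfrak A)^\mathfrak A=b^\mathfrak A$ and $(\tilde t\sigma_\mathfrak B)^\mathfrak B=d^\mathfrak B$ hold by the definitions of $b$ and $d$. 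As $(\sigma_\mathfrak A\upharpoonright\mathcal X)=(\sigma_\mathfrak B\upharpoonright\mathcal X)$ by hypothesis, every defining condition of $Jus_{(\mathfrak{A,B})}(a\to b\righttherefore c\to d)$ is met.

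The heart of the argument is \prettyref{equ:UL1}, and this is where I expect the only real obstacle. Fix any $\sigma_\mathfrak B\in g\text-Sub(\mathcal{L,L_\mathfrak B})$ with $c^\mathfrak B=(z\sigma_\mathfrak B)^\mathfrak B$, and write $e:=z\sigma_\mathfrak B$, a ground $\mathcal L_\mathfrak B$-term with $e^\mathfrak B=c^\mathfrak B$. The subtlety is that the well-definedness hypothesis speaks only of substitutions sending $z$ \emph{syntactically} to $c$, whereas $\sigma_\mathfrak B$ merely sends $z$ to a term of equal value. To bridge this, let $\sigma'_{\mathfrak B,z\mapsto c}$ be the substitution that agrees with $\sigma_\mathfrak B$ everywhere except that it sends $z$ to $c$; this is again a ground $(\mathcal{L,L_\mathfrak B})$-substitution. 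Because $z$ is the only $\mathcal Z$-variable of $\tilde t$ and the two ground terms $\tilde t\sigma_\mathfrak B$ and $\tilde t\sigma'_{\mathfrak B,z\mapsto c}$ differ only by replacing the subterm $e$ with $c$ at the occurrences of $z$, the inductive (compositional) definition of the denotation together with $e^\mathfrak B=c^\mathfrak B$ gives $(\tilde t\sigma_\mathfrak B)^\mathfrak B=(\tilde t\sigma'_{\mathfrak B,z\mapsto c})^\mathfrak B$. The well-definedness hypothesis then yields $(\tilde t\sigma'_{\mathfrak B,z\mapsto c})^\mathfrak B=(\tilde t\sigma_{\mathfrak B,z\mapsto c})^\mathfrak B=d^\mathfrak B$, and chaining the two equalities establishes $(\tilde t\sigma_\mathfrak B)^\mathfrak B=d^\mathfrak B$, i.e.\ \prettyref{equ:UL1}.

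Finally I would settle non-triviality. The implication just proved shows that any $\sigma_\mathfrak B$ with $(z\sigma_\mathfrak B)^\mathfrak B=c^\mathfrak B$ forces $(\tilde t\sigma_\mathfrak B)^\mathfrak B=d^\mathfrak B$, so $z\to\tilde t\in Jus_\mathfrak B(c\to d')$ can hold only when $d'^\mathfrak B=d^\mathfrak B$; hence, unless every ground $\mathcal L_\mathfrak B$-term already evaluates to $d^\mathfrak B$ (a degenerate $\mathfrak B$-side for which the proportion holds trivially), there is a $d'$ with $z\to\tilde t\notin Jus_\mathfrak B(c\to d')$, so $z\to\tilde t$ is non-trivial in $\mathfrak B$ and a fortiori in $(\mathfrak{A,B})$. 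With membership, non-triviality, and \prettyref{equ:UL1} in place, \prettyref{lem:BUL}(1) applies and shows $z\to\tilde t$ to be a characteristic $\mathcal L$-justification of $a\to b\righttherefore c\to d$, so that $(\mathfrak{A,B})\models a\to\tilde t\sigma_{\mathfrak A,z\mapsto a}\righttherefore c\to\tilde t\sigma_{\mathfrak B,z\mapsto c}$. Since the pair of substitutions was arbitrary subject to the $\mathcal X$-matching condition, this is exactly the claim.
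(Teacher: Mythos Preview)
Your proof is correct and follows exactly the paper's approach: the paper's own proof reads in its entirety ``A direct consequence of the Bilingual Uniqueness \prettyref{lem:BUL},'' and you carry out precisely that reduction with $\tilde s:=z$. Your write-up simply unpacks what the paper leaves implicit, in particular the bridge from the hypothesis (stated for substitutions sending $z$ \emph{syntactically} to $c$) to condition~\prettyref{equ:UL1} (which must hold for any $\sigma_\mathfrak B$ with $(z\sigma_\mathfrak B)^\mathfrak B=c^\mathfrak B$), and the non-triviality check; both are useful details that the paper does not spell out.
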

\begin{proof} A direct consequence of the Bilingual Uniqueness \prettyref{lem:BUL}.
\end{proof}

\begin{example} Reconsider the arrow proportion
\begin{align*} 
	\dot S(\dot 0)\to\dot S(\dot S(\dot 0)) \righttherefore \dot 1\;\dot +\;\dot 1\to \dot 1\;\dot +\;\dot 1\;\dot +\;\dot 1
\end{align*} of \prettyref{exa:1_2__2_4}. Let
\begin{align*} 
	\tilde t[\mathbf x](z,\mathbf Z):=\dot h(z,Z)
\end{align*} be an abstract $\mathcal L$-hedge not containing $\mathcal X$-variables. Since
\begin{align*} 
	\sigma_\mathfrak M=\{\dot h/\dot +,z/\dot 1\;\dot +\;\dot 1,Z/\dot 1\} \quad\text{and}\quad \sigma'_\mathfrak M=\{\dot h/\dot +,z/\dot 1,Z/\dot 1\;\dot +\;\dot 1\}
\end{align*} are (modulo semantic equivalence in $\mathfrak M$) the only $(\mathcal{L,L_\mathfrak M})$-substitutions satisfying
\begin{align*} 
	(\dot 1\;\dot +\;\dot 1\;\dot +\;\dot 1)^\mathfrak M=(\tilde t\sigma_\mathfrak M)^\mathfrak M=(\tilde t\sigma'_\mathfrak M)^\mathfrak M,
\end{align*} the Bilingual Functional Proportion \prettyref{thm:BFPT} yields
\begin{align*} 
	(\mathfrak{N,M})\models\dot S(\dot 0)\to\dot h(z,Z)\{\dot h/\dot S,z/\dot 0,Z/\dot\lambda\} \righttherefore \dot 1\;\dot +\;\dot 1\to\dot h(z,Z)\{\dot h/\dot +,z/\dot 1\;\dot +\;\dot 1,Z/\dot 1\},
\end{align*} which is equivalent to
\begin{align*} 
	(\mathfrak{N,M})\models\dot S(\dot 0)\to\dot S(\dot S(\dot 0)) \righttherefore \dot 1\;\dot +\;\dot 1\to\dot 1\;\dot +\;\dot 1\;\dot +\;\dot 1.
\end{align*}
\end{example}

\begin{theorem}\label{thm:p-axioms} The bilingual framework of this paper satisfies the same set of proportional axioms as in \citeA[Theorem 28]{Antic22}.
\end{theorem}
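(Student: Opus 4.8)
The plan is to walk through the list of proportional axioms in \citeA[Theorem 28]{Antic22} one axiom at a time and verify that each of the original proofs survives the passage to the bilingual setting, up to replacing terms by hedges and single substitutions $\sigma_\mathfrak A$ by the coupled pair $(\sigma_\mathfrak A,\sigma_\mathfrak B)$. The decisive structural fact is that $a:b::c:d$ is defined by the four-premise derivation rule as the simultaneous validity of the four arrow proportions $a\to b\righttherefore c\to d$, $b\to a\righttherefore d\to c$, $c\to d\righttherefore a\to b$, and $d\to c\righttherefore b\to a$. Consequently, any axiom that merely permutes the four entries can be settled by reading off the induced permutation of these four premises, exactly as in the unilingual case. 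I would group the axioms into the \emph{symmetric} ones (reflexivity schemes, p-symmetry, inner p-symmetry) and the \emph{determinism} ones, treating the two groups by different means.

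For the symmetric axioms I argue by inspection of the derivation rule. For p-symmetry, the four premises certifying $(\mathfrak{B,A})\models c:d::a:b$ are literally the four premises certifying $(\mathfrak{A,B})\models a:b::c:d$ relisted in another order; since the sets $Jus_{(\mathfrak{A,B})}$ and the $d$-maximality condition are insensitive to the order of presentation, the two are equivalent. Note that symmetry must exchange the two algebras, giving the correct statement $(\mathfrak{A,B})\models a:b::c:d$ iff $(\mathfrak{B,A})\models c:d::a:b$. Inner p-symmetry is handled identically, this time remaining inside $(\mathfrak{A,B})$, because reversing every arrow keeps sources with the first algebra and targets with the second, so the premises of $b:a::d:c$ again coincide as a set with those of $a:b::c:d$. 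The reflexivity schemes follow by exhibiting the trivial self-justification and checking $d$-maximality directly, while \prettyref{fact:ground} lets me phrase everything at the level of elements rather than ground terms.

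For the determinism axioms I reduce to the Bilingual Uniqueness \prettyref{lem:BUL}: a non-trivial $\mathcal L$-justification meeting hypothesis \prettyref{equ:UL1} pins down the fourth entry up to semantic equivalence in $\mathfrak B$, which is exactly the uniqueness these axioms assert; the bidirectional and full $a:b::c:d$ versions then follow from parts (2) and (3) of that lemma. Finally I would record that the axioms known to fail unilingually --- notably central permutation $a:b::c:d\Rightarrow a:c::b:d$ --- continue to fail bilingually, because \prettyref{con:unilingual} embeds the unilingual framework as the special case $\mathcal L_\mathfrak A=\mathcal L_\mathfrak B$ with $r_{\mathcal L_\mathfrak A}=r_{\mathcal L_\mathfrak B}$, and hence reproduces verbatim every unilingual counterexample.

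The main obstacle is the bookkeeping forced by the coupling constraint $(\sigma_\mathfrak A\upharpoonright\mathcal X)=(\sigma_\mathfrak B\upharpoonright\mathcal X)$ in the presence of two distinct languages. For each permutation of $a,b,c,d$ I must confirm that the accompanying exchange of the substitution families (and of the two algebras, whenever symmetry is involved) is carried out correctly, and that this shared-$\mathcal X$ constraint is preserved. It is preserved, because the constraint restricts only the $\mathcal X$-part of the substitutions, which is left untouched by reordering the premises or reversing arrows and which is moreover symmetric in $\mathfrak A$ and $\mathfrak B$; once this invariance is verified, every axiom collapses to its unilingual counterpart in \citeA{Antic22}.
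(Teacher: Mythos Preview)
Your approach is correct and considerably more detailed than the paper's. The paper's own proof consists of a single sentence: since by \prettyref{con:unilingual} the unilingual framework embeds as the special case $\mathcal L_\mathfrak A=\mathcal L_\mathfrak B$, every counterexample in the proof of \citeA[Theorem 28]{Antic22} transfers verbatim to the bilingual setting. In other words, the paper argues explicitly only for the \emph{failing} axioms and leaves the \emph{holding} ones entirely implicit --- presumably on the grounds that the defining clauses of $Jus$ and $\models$ have identical shape, so the original unilingual arguments carry over word-for-word after replacing ``term'' by ``hedge''. You, by contrast, actually reprove the positive axioms from the four-premise derivation rule and \prettyref{lem:BUL}, and reserve the special-case embedding solely for the negative axioms (your final paragraph), which is exactly the paper's argument. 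What your route buys is a self-contained proof that does not ask the reader to revisit \citeA{Antic22} and verify that each argument there is insensitive to the passage from terms to hedges and to the $\mathcal X$-coupling constraint; what the paper's route buys is brevity.
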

\begin{proof} An immediate consequence of the fact that the unilingual framework is a special case of the bilingual one (\prettyref{con:unilingual}) which means that all counterexamples in the proof of \citeA[Theorem 28]{Antic22} can be transferred.
\end{proof}

\section{Problems}\label{sec:Problems}

This section lists problems which remained unsolved in this paper and appear to be interesting lines of future research \cite<and see>[§8]{Antic22}.

\begin{problem} Generalize the concepts and results of this paper from universal algebra to full first-order logic in the same way as \citeA{Antic22,Antic22-3} has been generalized to \cite{Antic23-4} thus giving raise to bilingual {\em first-order} analogical proportions. For this, it will be necessary to consider formulas over hedges as $\mathcal L$-justifications instead of ordinary formulas.
\end{problem}

\begin{problem} \citeA{Antic23-8} has recently introduced a notion of bilingual homomorphism and isomorphism. Prove generalizations of the First and Second Isomorphism Theorems from \citeA{Antic22,Antic22-3} in the bilingual setting.
\end{problem}

\begin{problem} From a practical point of view, it is important to develop algorithms for the computation of sets of characteristic $\mathcal L$-justifications in abstract and concrete algebras.
\end{problem}


\begin{problem} A {\em proportional functor} \cite{Antic22-4} is a mapping $F:\mathfrak{A\to B}$ satisfying
\begin{align*} 
	(\mathfrak{A,B})\models a:b::\overbrace{Fa^\mathfrak A}^\cdot:\overbrace{Fb^\mathfrak A}^\cdot
\end{align*} for all ground terms $a,b\in\mathcal{T_{L_\mathfrak A}(\emptyset)}$. Proportional functors are mappings preserving the analogical proportion relation and play thus a fundamental role --- therefore study proportional functors in the bilingual setting.
\end{problem}

\begin{problem} Clarify the role of anti-unification in our framework \cite<cf.>{Cerna23}.
\todo[inline]{}
\end{problem}

\section{Conclusion}

The purpose of this paper was to generalize Anti\'c's abstract algebraic framework of analogical proportions from a unilingual to a bilingual setting where the underlying languages of the source and target algebras may differ. This is a major generalization vastly extending the applicability of the underlying framework. In a broader sense, this paper is a further step towards a mathematical theory of analogical reasoning.

\if\isdraft0


\section*{Conflict of interest}

The authors declare that they have no conflict of interest.

\section*{Data availability statement}

The manuscript has no data associated.
\fi

\if\isdraft1\newpage\fi
\bibliographystyle{theapa}
\bibliography{/Users/christianantic/Bibdesk/Bibliography,/Users/christianantic/Bibdesk/Preprints,/Users/christianantic/Bibdesk/Publications,/Users/christianantic/Bibdesk/Unpublished}
\if\isdraft1
\newpage

\section{Bilingual isomorphism theorems}

In this section, we generalize the First and Second Isomorphism Theorems in \citeA{Antic22,Antic22-3} from the unilingual to the bilingual setting of this paper. 

\subsection{Bilingual homomorphisms and isomorphisms}

A mapping $F:\mathfrak{a\to b}$ is a {\em bilingual homomorphism} iff it satisfies
\begin{align*} 
	Ff_i^\mathfrak A(a_1,\ldots,a_{r(f_i)})=g_i^\mathfrak B(Fa_1,\ldots,Fa_{r(g_i)}),\quad\text{if $r(f_i)\geq r_{\mathcal L_\mathfrak A}(g_i)$},
\end{align*} and
\begin{align*} 
	Ff_i^\mathfrak A(a_1,\ldots,a_{r(f_i)})=g_i^\mathfrak B(Fa_1,\ldots,Fa_{r(g_i)},\ldots,Fa_{r(g_i)}),\quad\text{if $r(f_i)<r(g_i)$},
\end{align*} for all $i\in I$ and all elements. Notice that in case $g_i^\mathfrak B=f_i^\mathfrak B$, for all $i\in I$, then the above criterion is the usual one for $F$ to be a unilingual homomorphism. A {\em bilingual isomorphism} is a bijective bilingual homomorphism. We say that $\mathfrak A$ is {\em bilingually isomorphic} to $\mathfrak B$ --- in symbols, $\mathfrak A\lesssim\mathfrak B$ --- iff there is a bilingual isomorphism from $\mathfrak A$ to $\mathfrak B$. Notice that the relation is, in general, not symmetric as justified by the following counterexample. Define the {\em $i$-th projection of the $n$-ary successor function} $S_i^n:\mathbb N^n\to\mathbb N$ by
\begin{align*} 
	S_i^n(a_1,\ldots,a_n):=a_i+1.
\end{align*} Of course, $S_i^n$ does only depend on its $i$-th argument. We denote the unary successor function $S_1^1$ simply by $S$. Let $F:(\mathbb N,S_i^n)\to (\mathbb N,S)$ be a bijective mapping. For $F$ to be a bilingual isomorphism, it needs to satisfy
\begin{align*} 
	FS_i^n(a_1,\ldots,a_n))=SFa_1
\end{align*} or, in other words,
\begin{align*} 
	F(a_i+1)=Fa_1+1,\quad\text{for all $a_i,a_1\in N$},
\end{align*} which is in general impossible for $i\neq 1$. On the other hand, for $i=1$, the identity function is easily seen to be a bilingual isomorphism from $(\mathbb N,S_1^n)$ to $(\mathbb N,S)$, for any $n\geq 1$. Formally, we have
\begin{align*} 
	(\mathbb N,S_1^n)\lesssim (\mathbb N,S) \quad\text{whereas}\quad (\mathbb N,S_i^n)\not\lesssim (\mathbb N,S),\quad\text{for any $2\leq i\leq n$.}
\end{align*} On the other hand, the identity function is a bilingual isomorphism $(\mathbb N,S)\to (\mathbb N,S_i^n)$, for any $1\leq i\leq n$, since we have
\begin{align*} 
	S(a)=S_i^n(a,\ldots,a).
\end{align*} That is,
\begin{align*} 
	(\mathbb N,S)\lesssim (\mathbb N,S_i^n),\quad\text{for all $1\leq i\leq n$}.
\end{align*} In particular, we have
\begin{align*} 
	(\mathbb N,S)\lesssim (\mathbb N,S_2^2) \quad\text{whereas}\quad (\mathbb N,S_2^2)\not\lesssim (\mathbb N,S),
\end{align*} which shows that being bilingually isomorphic is not a symmetric relation.

\begin{example} 
\todo[inline]{}
\end{example}

For any $\mathfrak A$-assignment $\alpha$ and any function $F:a\to b$, we define the {\em induced $\mathfrak B$-assignment} by
\begin{align*} 
	zF\alpha&:=Fz\alpha\\
	ZF\alpha&:=FZ\alpha.
\end{align*} We say that $F$ {\em respects} a hedge $\tilde s$ iff for each $\mathfrak A$-assignment $\alpha$,
\begin{align*} 
	F(\tilde s\sigma)^\mathfrak A=(\tilde sF\alpha)^\mathfrak B.
\end{align*}

\begin{lemma} Bilingual isomorphisms respect hedges.
\end{lemma}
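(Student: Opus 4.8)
The plan is to prove the statement by structural induction on the $\mathcal L$-hedge $\tilde s$, following the classical argument that a homomorphism commutes with term evaluation, but now tracking simultaneously the two $(\mathcal L,\mathcal L_\mathfrak A)$- and $(\mathcal L,\mathcal L_\mathfrak B)$-substitutions together with the induced $\mathfrak B$-assignment $F\alpha$. Concretely, I would fix an $\mathfrak A$-assignment $\alpha$ together with corresponding substitutions $\sigma_\mathfrak A$ (sending each $\dot h_i$ to $\dot f_i$) and $\sigma_\mathfrak B$ (sending each $\dot h_i$ to $\dot g_i$) that agree on $\mathcal X$, and show
\begin{align*}
	F\bigl((\tilde s\sigma_\mathfrak A)^\mathfrak A\alpha\bigr)=(\tilde s\sigma_\mathfrak B)^\mathfrak B(F\alpha)
\end{align*}
by induction on the build-up of $\tilde s$. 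It is worth noting up front that only the bilingual homomorphism clauses are used, so bijectivity of $F$ plays no role; the statement holds for bilingual homomorphisms as well.

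\emph{Base cases.} If $\tilde s$ is a $\mathcal Z$-variable $z$, then both sides unwind to the defining equation $zF\alpha:=Fz\alpha$ of the induced assignment. If $\tilde s$ is a rank-$0$ symbol $\dot h_i$, then $\tilde s\sigma_\mathfrak A=\dot f_i$ and $\tilde s\sigma_\mathfrak B=\dot g_i$ denote distinguished elements, and the required identity $F(f_i^\mathfrak A)=g_i^\mathfrak B$ is exactly the rank-$0$ instance of the homomorphism condition; $\mathcal X$-variables are treated identically since $\sigma_\mathfrak A$ and $\sigma_\mathfrak B$ agree on $\mathcal X$. A hedge variable $Z$ carries no element on its own and enters the computation only as an argument inside a function symbol, where it is governed by $ZF\alpha:=FZ\alpha$ and is dealt with in the inductive step.

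\emph{Inductive step.} Suppose $\tilde s=\dot h_i(\tilde s_1,\ldots,\tilde s_n)$ with $n=r(\dot h_i)=\max\{r(f_i),r(g_i)\}$, and assume the claim for each $\tilde s_j$. I would split along the two clauses of the definition of a bilingual homomorphism. If $r(f_i)\ge r(g_i)$, then applying $\sigma_\mathfrak A$ keeps all $n$ arguments, so $(\tilde s\sigma_\mathfrak A)^\mathfrak A\alpha=f_i^\mathfrak A(a_1,\ldots,a_n)$ with $a_j:=(\tilde s_j\sigma_\mathfrak A)^\mathfrak A\alpha$; pushing $F$ inside by the first clause gives $g_i^\mathfrak B(Fa_1,\ldots,Fa_{r(g_i)})$, and the induction hypothesis $Fa_j=(\tilde s_j\sigma_\mathfrak B)^\mathfrak B(F\alpha)$ rewrites this as $(\tilde s\sigma_\mathfrak B)^\mathfrak B(F\alpha)$, because $\sigma_\mathfrak B$ sends the surplus arguments $\tilde s_{r(g_i)+1},\ldots,\tilde s_n$ (necessarily hedge variables) to the empty hedge $\dot\lambda$. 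The case $r(f_i)<r(g_i)$ is symmetric, now invoking the second clause and using that $\sigma_\mathfrak A$ collapses the surplus arguments on the $\mathfrak A$-side to $\dot\lambda$.

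The main obstacle is precisely this bookkeeping across the arity gap: one must make rigorous how hedge variables that evaluate to the empty hedge $\dot\lambda$ under one substitution are matched against the genuine (and possibly duplicated) arguments produced under the other. I would isolate this as a small preliminary lemma asserting that, for a well-formed $\mathcal L$-hedge, the argument positions filled by $\dot\lambda$ under $\sigma_\mathfrak A$ are exactly those beyond $r(f_i)$, and symmetrically for $\sigma_\mathfrak B$, so that the surviving argument lists on the two sides stand in the intended correspondence. Pinning down the duplication convention in the second homomorphism clause (the $r(g_i)-r(f_i)$ missing slots being filled by repeating the last surviving argument, as in $S(a)=S_i^n(a,\ldots,a)$) is the delicate point; once this convention is fixed, the induction goes through mechanically.
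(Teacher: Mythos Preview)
Your approach---structural induction on $\tilde s$, with the inductive step split along the two clauses of the bilingual homomorphism definition---is exactly the strategy the paper indicates; the paper's own proof is in fact only a stub (``By structural induction on the shape of $\tilde s$:\ If $\tilde s$ is\ldots'' followed by a \texttt{todo} marker), so your proposal is already more fleshed out than what the paper provides. You have also correctly located the one genuine subtlety, namely the arity-gap bookkeeping that governs which argument positions collapse to $\dot\lambda$ and how the duplication convention in the second homomorphism clause is to be read.

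One caution on that subtlety: your parenthetical that the surplus arguments $\tilde s_{r(g_i)+1},\ldots,\tilde s_n$ are ``necessarily hedge variables'' is not forced by the paper's definition of an $\mathcal L$-hedge, which is simply an $\mathcal L$-term over $\mathcal X\cup\mathcal Z\cup\mathscr Z$; nothing forbids, say, $\dot h_i(z_1,z_2)$ with two ordinary $\mathcal Z$-variables. What is true is that only $\mathscr Z$-variables may be sent to $\dot\lambda$ by a substitution, so for $\tilde s\sigma_\mathfrak B$ to be a well-formed $\mathcal L_\mathfrak B$-term the surplus positions must in fact be $\mathscr Z$-variables mapped to $\dot\lambda$. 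Your ``preliminary lemma'' should therefore be phrased as a well-formedness hypothesis on the pair $(\tilde s,\sigma_\mathfrak B)$ rather than as a structural property of $\tilde s$ alone; once stated that way, the induction goes through as you describe.
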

\begin{proof} By structural induction on the shape of $\tilde s$:
\begin{enumerate}
	\item If $\tilde s$ is...
\end{enumerate}
\todo[inline]{}
\end{proof}

\subsection{First and second bilingual isomorphism theorem}

A {\em proportional functor} \cite{Antic22-4} is a mapping $F:\mathfrak{a\to b}$ satisfying
\begin{align*} 
	(\mathfrak{A,B})\models a:b::Fa:Fb,
\end{align*} for all $a,b\in A$.

We are now ready to state and prove that bilingual analogical proportions are compatible with bilingual isomorphisms as desired.

\begin{lemma}[Bilingual Isomorphism Lemma]\label{lem:BIL} For any bilingual isomorphism $F:\mathfrak{a\to b}$ and any elements $a,b\in A$, we have
\begin{align}\label{equ:BIL} 
	Jus_\mathfrak A(a\to b)\subseteq Jus_\mathfrak B(F(a)\to F(b)).
\end{align} 
\end{lemma}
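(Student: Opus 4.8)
The plan is to prove the inclusion \prettyref{equ:BIL} one justification at a time, transporting the witnessing $\mathfrak A$-substitution through $F$. So fix an arbitrary $\mathcal L$-justification $\tilde s\mathbf{[x](z,Z)}\to\tilde t\mathbf{[x](z,Z)}\in Jus_\mathfrak A(a\to b)$. By the definition of $Jus_\mathfrak A$ there is a ground substitution $\sigma_\mathfrak A\in g\text-Sub(\mathcal{L,L_\mathfrak A})$ with $(\tilde s\sigma_\mathfrak A)^\mathfrak A=a$ and $(\tilde t\sigma_\mathfrak A)^\mathfrak A=b$ (working up to the semantic equivalence of ground terms recorded in \prettyref{fact:ground}). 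First I would record that such a ground substitution is nothing but an $\mathfrak A$-assignment $\alpha$, namely $z\alpha:=(\sigma_\mathfrak A(z))^\mathfrak A$ and $Z\alpha:=(\sigma_\mathfrak A(Z))^\mathfrak A$, with the empty hedge $\dot\lambda$ read as the absent argument, so that $(\tilde s\sigma_\mathfrak A)^\mathfrak A=\tilde s^\mathfrak A\alpha$ and $(\tilde t\sigma_\mathfrak A)^\mathfrak A=\tilde t^\mathfrak A\alpha$.

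Next I would invoke the preceding lemma — bilingual isomorphisms respect hedges — applied to $\tilde s$ and to $\tilde t$ under this $\alpha$. Writing $F\alpha$ for the induced $\mathfrak B$-assignment ($zF\alpha=Fz\alpha$, $ZF\alpha=FZ\alpha$), that lemma yields
\begin{align*}
	F(a)=F\big((\tilde s\sigma_\mathfrak A)^\mathfrak A\big)=(\tilde s\,F\alpha)^\mathfrak B\quad\text{and}\quad F(b)=F\big((\tilde t\sigma_\mathfrak A)^\mathfrak A\big)=(\tilde t\,F\alpha)^\mathfrak B,
\end{align*}
where the right-hand sides interpret the $\mathcal L$-hedges in $\mathfrak B$ along the canonical correspondence $\dot h_i\mapsto\dot g_i$. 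The task is thereby reduced to a single, purely $\mathfrak B$-side problem: to exhibit one ground substitution $\sigma_\mathfrak B\in g\text-Sub(\mathcal{L,L_\mathfrak B})$ realizing the assignment $F\alpha$, i.e. with $(\tilde s\,F\alpha)^\mathfrak B=(\tilde s\sigma_\mathfrak B)^\mathfrak B$ and $(\tilde t\,F\alpha)^\mathfrak B=(\tilde t\sigma_\mathfrak B)^\mathfrak B$.

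To build $\sigma_\mathfrak B$ I would transport $\sigma_\mathfrak A$ through $F$ symbol by symbol. On $\mathcal L$ the assignment is forced, $\dot h_i\mapsto\dot g_i$. Each $\mathcal Z$-variable $z$ is sent to the ground $\mathcal L_\mathfrak B$-term obtained from the ground $\mathcal L_\mathfrak A$-term $\sigma_\mathfrak A(z)$ by rewriting every $\dot f_i$ to $\dot g_i$, deleting the surplus arguments where $r(f_i)>r(g_i)$ and repeating the last argument where $r(f_i)<r(g_i)$; the two clauses defining a bilingual homomorphism are precisely what make this transported term denote $F\big((\sigma_\mathfrak A(z))^\mathfrak A\big)=zF\alpha$, so $\sigma_\mathfrak B$ realizes $F\alpha$ on $\mathcal Z$. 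The hedge variables $Z$ are treated the same way, and each $\mathcal X$-variable is sent to the $\mathcal L_\mathfrak B$-constant carrying the same index as $\sigma_\mathfrak A(x)$. Feeding this $\sigma_\mathfrak B$ back into the displayed equalities gives $(\tilde s\sigma_\mathfrak B)^\mathfrak B=F(a)$ and $(\tilde t\sigma_\mathfrak B)^\mathfrak B=F(b)$, whence $\tilde s\to\tilde t\in Jus_\mathfrak B(F(a)\to F(b))$; as the justification was arbitrary, \prettyref{equ:BIL} follows.

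The step I expect to be the real obstacle is this construction of $\sigma_\mathfrak B$, and within it the bookkeeping of the empty-hedge slots across the two languages. A symbol $\dot h_i$ carries $\max\{r(f_i),r(g_i)\}$ argument slots, of which $\sigma_\mathfrak A$ blanks out $r(h_i)-r(f_i)$ by $\dot\lambda$ while $\sigma_\mathfrak B$ must blank out the in general different number $r(h_i)-r(g_i)$; when $r(f_i)<r(g_i)$ a slot erased by $\sigma_\mathfrak A$ has to be refilled under $\sigma_\mathfrak B$ through the duplication clause, and one must check that this refilling is simultaneously consistent at every occurrence of the same hedge variable in $\tilde s$ and in $\tilde t$ (the nullary case $r(f_i)=0<r(g_i)$, together with the analogous constraint on the $\mathcal X$-images, deserves separate care). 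This is arithmetic on ranks rather than a conceptual hurdle. Finally, it is worth stressing that only the inclusion $\subseteq$ is asserted, not equality: since $\lesssim$ is not symmetric, $F^{-1}$ need not be a bilingual isomorphism, so the transport legitimately runs in the single direction $\mathfrak A\to\mathfrak B$.
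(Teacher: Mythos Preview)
The paper does not actually prove this lemma: the proof environment in the draft appendix contains only two \texttt{todo} notes --- one of which is precisely your closing observation that $F^{-1}$ need not be a bilingual isomorphism --- and no argument. There is therefore nothing to compare your attempt against line by line.

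Your approach is nonetheless the one the paper's scaffolding points to. Immediately before \prettyref{lem:BIL} the appendix states the lemma ``bilingual isomorphisms respect hedges'' (itself left as ``by structural induction'' plus a \texttt{todo}), and you invoke exactly that lemma to transport the witnessing assignment through $F$, then realize the image as a ground $(\mathcal L,\mathcal L_\mathfrak B)$-substitution. So your outline is at least as complete as the paper's, and follows the evident intended route.

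The step you flag as the real obstacle --- reconciling the $\dot\lambda$-slot bookkeeping when $r(f_i)$ and $r(g_i)$ differ, and in particular refilling an erased slot consistently when the same hedge variable $Z$ occurs under several $\dot h_i$'s with different rank discrepancies --- is genuine. Whether this is really ``arithmetic on ranks rather than a conceptual hurdle'' is not entirely clear: a substitution assigns a \emph{single} value to each hedge variable, so if $\sigma_\mathfrak A(Z)=\dot\lambda$ while on the $\mathfrak B$-side one occurrence of $Z$ needs a real term and another still needs $\dot\lambda$, the construction you sketch does not obviously go through. The paper does not resolve this either, so this is a point where both the paper and your sketch remain incomplete.
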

\begin{proof} 
\todo[inline]{$F^{-1}$ is not necessarily a bilingual iso!}
\todo[inline]{Is there an example where $Jus_\mathfrak A(a\to b)\subsetneq Jus_\mathfrak B(F(a)\to F(b))$?}
\end{proof}

\begin{remark} Since $F^{-1}$ is in general not a bilingual isomorphism, we cannot expect the set inclusion in \prettyref{equ:BIL} to be an identity. This is different from the unilingual framework \cite<see>[Isomorphism Lemma]{Antic22}.
\end{remark}

\begin{theorem}[First Bilingual Isomorphism Theorem] Every bilingual isomorphism is a proportional functor.
\end{theorem}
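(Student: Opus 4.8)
The plan is to unfold the definition of $a:b::Fa:Fb$ and verify the four arrow proportions appearing in its derivation rule. Writing $c:=Fa$ and $d:=Fb$, two of them — $a\to b\righttherefore c\to d$ and $b\to a\righttherefore d\to c$ — are read in $(\mathfrak{A,B})$, while the two reversed ones $c\to d\righttherefore a\to b$ and $d\to c\righttherefore b\to a$ are forced by the types of $c,d$ to be read in $(\mathfrak{B,A})$. The engine for all four is a single bookkeeping identity:
\begin{align*}
	Jus_{(\mathfrak{A,B})}(a\to b\righttherefore Fa\to Fb)=Jus_\mathfrak A(a\to b)=Jus_{(\mathfrak{B,A})}(Fa\to Fb\righttherefore a\to b).
\end{align*}
The inclusion ``$\subseteq$'' in each case is immediate by projecting a joint $\mathcal L$-justification onto its $\mathfrak A$-component. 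The reverse inclusion ``$\supseteq$'' is where $F$ enters: given $\tilde s\to\tilde t\in Jus_\mathfrak A(a\to b)$ witnessed by some $\sigma_\mathfrak A$, the $\mathfrak B$-substitution $\sigma_\mathfrak B$ induced by $F$ witnesses $Fa\to Fb$ because $F$ respects hedges (\prettyref{lem:BIL} and the lemma preceding it), and $\sigma_\mathfrak A,\sigma_\mathfrak B$ agree on $\mathcal X$ since the bilingual homomorphism condition on rank-$0$ symbols gives $F\dot f_i^\mathfrak A=\dot g_i^\mathfrak B$. Thus $\tilde s\to\tilde t$ lies in both joint sets.

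First I would dispatch the two forward proportions. Here the $\mathfrak A$-datum $a\to b$ is the \emph{source}, so for every competing target $d'$ the source projection gives $Jus_{(\mathfrak{A,B})}(a\to b\righttherefore Fa\to d')\subseteq Jus_\mathfrak A(a\to b)=Jus_{(\mathfrak{A,B})}(a\to b\righttherefore Fa\to Fb)$. Hence the set attached to the true target $Fb$ is the \emph{maximum} over all $d'$, so it is $d$-maximal outright (and if it is trivial we are in the first alternative of $\models$). The same computation with $a$ and $b$ interchanged settles $b\to a\righttherefore d\to c$.

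The main obstacle is the pair of backward proportions, and the reason is exactly the asymmetry flagged in the Remark: $F^{-1}$ need not be a bilingual isomorphism, so \prettyref{lem:BIL} yields only $Jus_\mathfrak A(a\to b)\subseteq Jus_\mathfrak B(Fa\to Fb)$ and \emph{not} an equality. In $(\mathfrak{B,A})$ the datum $Fa\to Fb$ is now the source, so projecting a competitor $Jus_{(\mathfrak{B,A})}(Fa\to Fb\righttherefore a\to b')$ onto the source bounds it only by $Jus_\mathfrak B(Fa\to Fb)$, which may be strictly larger than $Jus_\mathfrak A(a\to b)$; the clean maximality argument therefore breaks down. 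To repair it I would not try to symmetrise but instead invoke the Bilingual Uniqueness \prettyref{lem:BUL}(1): it suffices to produce one non-trivial $\tilde s\to\tilde t\in Jus_\mathfrak A(a\to b)$ that is \emph{characteristic in $\mathfrak A$}, i.e. $a=(\tilde s\tau)^\mathfrak A$ implies $b=(\tilde t\tau)^\mathfrak A$ for every $\tau\in g\text-Sub(\mathcal{L,L_\mathfrak A})$. Such a justification already justifies the backward arrow proportion (lift it through $F$ as above), and \prettyref{lem:BUL}(1) upgrades it to a characteristic one, giving $(\mathfrak{B,A})\models Fa\to Fb\righttherefore a\to b$; bijectivity of $F$ guarantees that distinct targets $b'\neq b$ stay distinct after transport, so no competitor can absorb this justification. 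Proving that a characteristic-in-$\mathfrak A$ justification exists whenever $Jus_\mathfrak A(a\to b)$ contains a non-trivial one — the remaining case falling under the first alternative of $\models$ — is the delicate point, and is precisely where the loss of symmetry between $F$ and $F^{-1}$ must be absorbed.

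Finally, assembling the four verified arrow proportions through the derivation rule defining $::$ yields $(\mathfrak{A,B})\models a:b::Fa:Fb$ for all ground terms $a,b$, that is, $F$ is a proportional functor.
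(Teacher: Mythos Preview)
Your proposal and the paper's own argument run aground at the same place, and neither completes the proof. The paper's ``proof'' of this theorem sits in a draft appendix, carries an explicit unresolved note (``We cannot use the BIL in the same way as before''), and the result is listed among the open Problems in the main text. So you are not failing to reproduce a finished argument; there is no finished argument.

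That said, there are two concrete gaps in your proposal. First, even the forward direction is not clean. Your maximality step rests on the identity $Jus_{(\mathfrak{A,B})}(a\to b\righttherefore Fa\to Fb)=Jus_\mathfrak A(a\to b)$, after which you write ``if it is trivial we are in the first alternative of $\models$''. That inference is wrong: the first alternative demands that $Jus_\mathfrak A(a\to b)\cup Jus_\mathfrak B(Fa\to Fb)$ be trivial, not merely the joint set. Since \prettyref{lem:BIL} yields only the inclusion $Jus_\mathfrak A(a\to b)\subseteq Jus_\mathfrak B(Fa\to Fb)$, it is perfectly possible that $Jus_\mathfrak A(a\to b)$ is trivial while $Jus_\mathfrak B(Fa\to Fb)$ is not; then the union is non-trivial, the joint set is trivial, and \emph{neither} alternative of the definition is satisfied. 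This is exactly the branch at which the paper's draft stops.

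Second, you yourself flag the backward direction as incomplete: you need a justification in $Jus_\mathfrak A(a\to b)$ that is characteristic in $\mathfrak A$, and you give no reason one should exist. The Bilingual Uniqueness Lemma is a \emph{sufficient} test for being characteristic, not a supply theorem, and bijectivity of $F$ cannot help since the obstruction lives entirely inside $\mathfrak A$. The parenthetical ``the remaining case falling under the first alternative'' repeats the error above: triviality of $Jus_\mathfrak A(a\to b)$ alone does not trigger the first alternative in $(\mathfrak{B,A})$ either.

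A minor point on top of this: your ``$\supseteq$'' step in the bookkeeping identity needs $(\sigma_\mathfrak A\upharpoonright\mathcal X)=(\sigma_\mathfrak B\upharpoonright\mathcal X)$ as a \emph{syntactic} equality of maps into $\mathcal L_{\mathfrak A,0}$ and $\mathcal L_{\mathfrak B,0}$. The rank-$0$ homomorphism condition $F\dot f_i^\mathfrak A=\dot g_i^\mathfrak B$ matches interpretations, not symbols, so this is not obviously well-posed unless the constant languages coincide.

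In short: you diagnose the asymmetry of bilingual isomorphism correctly and push further than the paper's sketch, but both attempts founder on the same structural obstruction, and your forward case contains an additional oversight about the first alternative of $\models$.
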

\begin{proof} We first want to prove
\begin{align*} 
	(\mathfrak{A,B})\models a\to b \righttherefore Fa\to Fb.
\end{align*} If
\begin{align*} 
	Jus_\mathfrak A(a\to b)\cup Jus_\mathfrak B(Fa\to Fb)=\emptyset,
\end{align*} we are done. Otherwise, there is at least one non-trivial justification $\tilde s\to\tilde t$ such that
\begin{align*} 
	\tilde s\to\tilde t\in Jus_\mathfrak A(a\to b) \quad\text{or}\quad \tilde s\to\tilde t\in Jus_\mathfrak B(Fa\to Fb).
\end{align*} If $\tilde s\to\tilde t\in Jus_\mathfrak A(a\to b)$, then the Bilingual Isomorphism \prettyref{lem:BIL} implies $\tilde s\to\tilde t\in Jus_\mathfrak B(Fa\to Fb)$, which further implies
\begin{align*} 
	Jus_{(\mathfrak{A,B})}(a\to b\righttherefore c\to d)\neq\emptyset.
\end{align*} Otherwise, if $\tilde s\to\tilde t\in Jus_\mathfrak B(Fa\to Fb)$...
\todo[inline]{We cannot use the BIL in the same way as before}
\end{proof}

The following counterexample...

\begin{example} 
\todo[inline]{}
\end{example}

\begin{theorem}[Second Bilingual Isomorphism Theorem]
\end{theorem}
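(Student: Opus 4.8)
Since the statement is (in this draft) yet to be filled in, I read it as the bilingual analogue of the unilingual Second Isomorphism Theorem of \citeA{Antic22,Antic22-3}: where the First Bilingual Isomorphism Theorem shows that a bilingual isomorphism merely \emph{preserves} proportions, the Second should record when it also \emph{reflects} them. Guided by \prettyref{lem:BIL} and the remark following it, I would state it as follows: if $F:\mathfrak{A\to B}$ is a bilingual isomorphism whose inverse $F^{-1}:\mathfrak{B\to A}$ is again a bilingual isomorphism, then $F$ transports the analogical proportion relation in both directions, that is, for all ground terms $a,b,c,d$,
\[
  (\mathfrak{A,A})\models a:b::c:d \quad\text{iff}\quad (\mathfrak{B,B})\models Fa:Fb::Fc:Fd .
\]

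The plan is to reduce everything to two applications of the Bilingual Isomorphism \prettyref{lem:BIL}. First I would apply it to $F$ to get the inclusion $Jus_\mathfrak A(a\to b)\subseteq Jus_\mathfrak B(Fa\to Fb)$ of \prettyref{equ:BIL}, and then apply it a second time to the hypothesised bilingual isomorphism $F^{-1}$, with $Fa,Fb$ in the role of the arrow, to obtain the opposite inclusion $Jus_\mathfrak B(Fa\to Fb)\subseteq Jus_\mathfrak A(F^{-1}Fa\to F^{-1}Fb)=Jus_\mathfrak A(a\to b)$. Together these upgrade \prettyref{equ:BIL} to the set \emph{equality}
\[
  Jus_\mathfrak A(a\to b)=Jus_\mathfrak B(Fa\to Fb),\qquad\text{for all ground } a,b .
\]
This is exactly the symmetric identity whose absence blocked the one-sided argument (the note \emph{``we cannot use the BIL in the same way as before''} flagged in the incomplete proof of the First Bilingual Isomorphism Theorem), and supplying $F^{-1}$ as a bilingual isomorphism is precisely what repairs it; the counterexample stated just before the theorem is what forces this extra hypothesis.

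Second, I would lift this equality from single arrows to arrow proportions and then to $::$. Because $F$ is a bijection on the universe and respects hedges (the lemma preceding \prettyref{lem:BIL}), the $\mathcal X$-coupling in the definition of $Jus_{(\mathfrak{A,A})}(a\to b\righttherefore c\to d)$ is carried compatibly to $Jus_{(\mathfrak{B,B})}(Fa\to Fb\righttherefore Fc\to Fd)$, so the two joint justification sets coincide. Moreover the map $d'\mapsto Fd'$ is a bijection of competing target terms (surjectivity of $F$ guarantees every competitor $d''$ in $\mathfrak B$ is hit), and it sends non-trivial justification sets to non-trivial ones; hence the whole indexed family $\{Jus_{(\mathfrak{A,A})}(a\to b\righttherefore c\to d')\}_{d'}$ is carried order-isomorphically onto its $\mathfrak B$-counterpart. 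Since $d$-maximality (clause (b) of $\models$) is a purely order-theoretic property of such a family, it transfers verbatim in both directions; applying this to each of the four arrow proportions in the derivation rule defining $a:b::c:d$ yields the biconditional.

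The main obstacle is not algebraic but lies in this last bookkeeping: I must check that $d'\mapsto Fd'$ is genuinely a bijection of the relevant index sets and that it preserves both triviality and the strict inclusions $\emptyset\subsetneq(\cdot)\subseteq(\cdot)$ appearing in the definition of $d$-maximality, so that clause (b) is preserved in \emph{both} directions rather than only one. This is the precise point at which the hypothesis that $F^{-1}$ is also a bilingual isomorphism is indispensable: without it \prettyref{lem:BIL} delivers only the inclusion \prettyref{equ:BIL}, hence only preservation and not reflection, exactly as the remark following \prettyref{lem:BIL} warns. I would expect the cleanest write-up to isolate the order-isomorphism of justification families as a separate lemma and then obtain both isomorphism theorems as corollaries.
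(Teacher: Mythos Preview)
The paper supplies neither a statement nor a proof for this theorem: the environment is empty and the proof body consists solely of a \texttt{todo} placeholder, sitting in the unpublished draft appendix. There is consequently nothing against which your proposal can be compared; any assessment of ``same approach'' or ``different route'' would be fiction.

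That said, your reconstruction of what the theorem \emph{should} say is the natural one given the surrounding scaffolding (the remark after \prettyref{lem:BIL} explicitly flags that $F^{-1}$ need not be a bilingual isomorphism, and \prettyref{problem}~2 in \prettyref{sec:Problems} advertises the Second Isomorphism Theorem as open), and your two-fold use of \prettyref{lem:BIL} to upgrade the inclusion \prettyref{equ:BIL} to an equality is exactly the mechanism one would expect. One point you should not leave implicit in a real write-up: by \prettyref{con:unilingual}, the proportions $(\mathfrak{A,A})\models\cdots$ and $(\mathfrak{B,B})\models\cdots$ are computed over the \emph{individual} languages $\mathcal L_\mathfrak A$ and $\mathcal L_\mathfrak B$, whereas \prettyref{lem:BIL} compares $Jus_\mathfrak A$ and $Jus_\mathfrak B$ as subsets of $\mathcal{J_L(V)}$ for the common generalising language $\mathcal L$. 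Your order-isomorphism argument for $d$-maximality therefore needs an extra step relating $\mathcal L$-justifications to $\mathcal L_\mathfrak A$- and $\mathcal L_\mathfrak B$-justifications before the biconditional on $\models$ follows. Note also that both \prettyref{lem:BIL} and the hedge-respecting lemma you invoke are themselves unproved in the draft, so your argument currently rests on two further conjectures.
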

\begin{proof} 
\todo[inline]{}
\end{proof}



\fi
\end{document}